\documentclass[english]{article} 
\pdfoutput=1

\usepackage{amsmath, amssymb, graphicx, url, algorithm2e}

\usepackage{times}
\usepackage{graphicx}
\usepackage{psfrag}
\usepackage{caption}
\captionsetup{compatibility=false}
\usepackage{subcaption}
\usepackage{url}

 \usepackage{blkarray}
 \usepackage{multirow}
 \usepackage[table]{xcolor}

\usepackage{color}
\usepackage{amsthm}


\newtheorem{question}{Question}
\newtheorem{problem}{Problem}

\newtheorem{theorem}{Theorem}

\newtheorem{example}{Example}
\newtheorem{lemma}[theorem]{Lemma}
\newtheorem{corollary}[theorem]{Corollary}
\newtheorem{observation}{Observation}

\theoremstyle{plain}
\newtheorem{definition}[theorem]{Definition}

 \newcommand{\suppx}[2][D]{\ensuremath{\mathfrak{S}_{#1}({#2})}}
 \newcommand{\sxd}{\ensuremath{S_{X,D}}}
 \newcommand{\hsxd}{\ensuremath{H(S_{X,D})}}
 \newcommand{\cald}{\ensuremath{\mathfrak{I}(D)}}
 \newcommand{\calsxd}{\ensuremath{\mathfrak{I}(\sxd)}}

\newcommand{\R}[1][d]{\ensuremath{\mathbb{R}^{#1}}}
\newcommand{\proj}[1][d-1]{\ensuremath{\mathbb{P}^{#1}(\mathbb{R})}}
\newcommand{\algesystem}{\ensuremath{(H,Q)(p)}}
\newcommand*{\Cdot}{{\scalebox{1.25}{$\,\cdot\,$}}}
\newcommand{\framework}[1][p]{\ensuremath{(H,Q,#1)}}
\newcommand{\scalemath}[2]{\scalebox{#1}{\mbox{\ensuremath{\displaystyle #2}}}}

\usepackage{multirow}
\usepackage{tabu}


%
%
%
%


\begin{document}
\title{Combinatorial rigidity of Incidence systems and Application to Dictionary learning\thanks{This research was supported in part by the research grant NSF CCF-1117695 and a research gift from SolidWorks. Portions of this paper have appeared on Proceedings of 2014 Canadian Conference on Computational Geometry and Post-Proceedings of 2014 International Workshop on Automated Deduction in Geometry.}}

 
  \author{Meera Sitharam
  	\and Mohamad Tarifi
  	\and Menghan Wang} 
  

\maketitle

\begin{abstract}

Given a hypergraph $H$ with $m$ hyperedges and a set $Q$ of $m$ \emph{pinning subspaces},
i.e.\ globally fixed subspaces in Euclidean space $\R$,
a \emph{pinned subspace-incidence system} is the pair $(H, Q)$,
with the constraint that each pinning subspace in $Q$ is contained in the subspace spanned
by the point realizations in $\R$
of vertices of the corresponding hyperedge of $H$.
This paper
provides a combinatorial characterization of pinned subspace-incidence systems
that are \emph{minimally rigid},
i.e.\ those systems that are guaranteed to generically yield a locally
unique realization. 

Pinned subspace-incidence systems have applications in the \emph{Dictionary Learning (aka sparse coding)} problem,
i.e.\ the problem of obtaining a sparse representation of a given set of  data vectors by learning \emph{dictionary vectors} upon which  the data vectors can be written as sparse linear combinations.
Viewing the dictionary vectors from a geometry perspective as the spanning set of a subspace arrangement,
the result gives a tight bound on the number
of dictionary vectors for sufficiently randomly chosen data vectors, and
gives a way of constructing a dictionary that meets the bound.   For less
stringent restrictions on data, but a natural modification of the
dictionary learning problem, a further dictionary learning algorithm is
provided.
Although there are recent rigidity based approaches for low rank matrix completion,
we are unaware of prior application of combinatorial rigidity techniques 
in the setting of Dictionary Learning.  
We also provide a systematic classification of problems related to dictionary learning 
together with various algorithms, their assumptions  and performance. 
\end{abstract}

%

\pagestyle{myheadings}
\thispagestyle{plain}

\section{Introduction}

A \emph{pinned subspace-incidence system} $(H,Q)$ is an incidence constraint system
specified as a hypergraph $H$ together with a set $Q$ of \emph{pinning subspaces} in $\R$,
each specified by a collection of basis vectors called \emph{pins}.
The pinning subspaces are in one-to-one correspondence with the hyperedges of $H$.
A realization of $(H,Q)$ is a subspace arrangement that assigns vectors in $\R$ to the vertices of $H$.
Each hyperedge of $H$ is assigned the subspace spanned by its vertex vectors and contains the associated pinning subspace in $Q$.

Pinned subspace-incidence systems naturally arise in finding bounds for the \emph{dictionary learning (aka sparse coding)} problem~\cite{olshausen1997sparse} for highly general data,
i.e.\ the problem of obtaining a sparse representation of data vectors by learning \emph{dictionary vectors} upon which the data vectors can be written as sparse linear combinations.
This set of dictionary vectors can be viewed from a geometry perspective as the spanning set of a subspace arrangement,
where the subspaces contain the data vectors.
The subset of data vectors on a given subspace spans a pinning subspace.
Thus the solution of dictionary learning problem corresponds to a pinned subspace-incidence system.

The special case of 2-dimensional pinned line-incidence systems was also used in modeling microfibirils in biomaterials such as cellulose and collagen~\cite{baker2015designing}.
Each such microfibril is attached to some fixed larger organelle/membrane at one site 
and  cross-linked at two sites with other fibrils,
where the \emph{cross-linking} is like an incidence constraint that the
crosslinked fibrils can slide against each other while remaining incident.
Consequently,  they can be modeled using a 
pinned line-incidence system with $H$ being a graph, 
where each fibril is modeled as an edge of $H$ with the two cross-linkings as its two vertices,
and the attachment is modeled as the corresponding pin.

Previous works on related types of frameworks
include pin-collinear body-pin frameworks~\cite{jackson2008pin}, 
direction networks~\cite{whiteley1996some},
slider-pinning rigidity~\cite{streinu2010slider},
body-cad constraint system~\cite{haller2012body},
$k$-frames~\cite{white1987algebraic,white1983algebraic}, 
and affine rigidity	~\cite{gortler2013affine}.
All of which involve some form of incidence constraints.
However, we are not aware of any previous results on systems that are similar to pinned subspace-incidence systems.

\begin{figure}
	\centering
		\includegraphics[width=.4\linewidth]{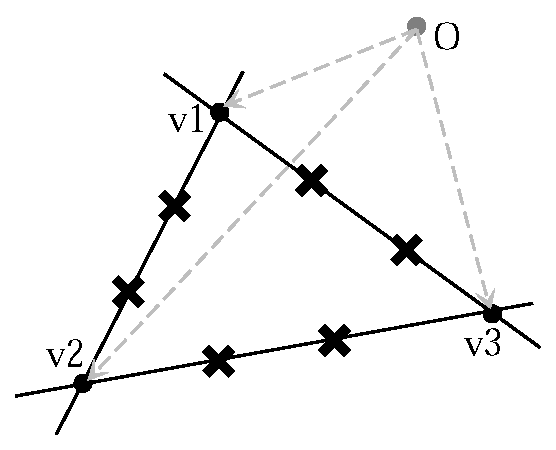}
		\caption{A pinned subspace-incidence framework with $d=3$, projectivized in $\proj[2]$.}
		\label{fig:pinned_hypergraph}
\end{figure}

\section{Contributions}

In this paper, we follow the combinatorial rigidity approaches \cite{asimow1978rigidity,white1987algebraic}
to give a complete combinatorial characterization for pinned subspace-incidence systems.
Specifically, 
\begin{itemize}
\item We formulate the pinned subspace-incidence systems 
	as a nonlinear algebraic system $\algesystem$ and apply  classic method of Asimow and Roth \cite{asimow1978rigidity} by linearizing  $\algesystem$.
\item We then apply another well-known method of White and Whiteley \cite{white1987algebraic} to combinatorially characterize the rigidity of the underlying hypergraph $H$,  using the Laplace decomposition of the \emph{rigidity matrix}, 
which corresponds to a \emph{map-decomposition}~\cite{streinu2009sparse} of the underlying hypergraph. 
The  polynomial resulting from the Laplace decomposition is called the
\emph{pure condition}, which characterizes the conditions that the framework has 
to avoid for the combinatorial characterization to hold.	
\end{itemize}

To our best knowledge, 
the only known results with a similar flavor are \cite{haller2012body,lee-st.john2013combinatorics} which characterize the rigidity of Body-and-Cad frameworks. 
However, these results are dedicated to specific frameworks in 3D instead of arbitrary dimension subspace arrangements and hypergraphs, 
and their formulation process start directly with the linearized Jacobian (omitting the first bullet alone). 

We then apply the combinatorial rigidity result to  dictionary learning problems.
\begin{itemize}
	\item As a corollary of the main result, we give a tight bound on the number
	of dictionary vectors for sufficiently randomly chosen data vectors, and
	give a way of constructing a dictionary that meets the bound
(see Corollary~\ref{cor:dictionary_size_bound} and \ref{corr:straight_forward_algo}).

	\item 	 
	%
	On the other hand, more common types of data can be handled using 
	a standard preprocessing step such as generalized PCA~\cite{vidal2005generalized}
	that converts the dictionary learning problem to a so-called \emph{fitted dictionary learning problem}
	that yields a specific pinned subspace incidence system where a realization yields a dictionary,
	followed by recursive decomposition of the underlying pinned subspace-incidence system (referred to as \emph{DR-planning}) to obtain the realization.
	
	%
	\item We also provide a systematic classification of problems related to Dictionary Learning 
	together with various approaches, assumptions required and performance
	(see Section \ref{sec:review}).   
	%
\end{itemize}
There are some recent applications of
rigidity in machine learning~\cite{jackson2014combinatorial,kiraly2012algebraic,singer2010uniqueness,liang2012identifiability}, 
specifically for low rank
matrix completability. These use the graph of entries of distance matrices
and gram matrices, i.e., rigidity with respect to distance and  inner
product constraints. We are however unaware of applications of rigidity
with respect to incidence constraints, or to dictionary learning.

\smallskip



\section{Preliminaries}
\label{sec:def}

In this section, we introduce the formal definition of pinned subspace-incidence systems
and basic concepts in combinatorial rigidity. 

A \emph{hypergraph} $H=(V,E)$ is a set $V$ of vertices and a set $E$ of hyperedges, 
where each hyperedge is a subset of $V$. 
The \emph{rank} $r(H)$ of a hypergraph $H$ is the maximum cardinality of any edge in $E$,
i.e.\ $r(H) = \max_{e_k \in E} s(e_k)$, where $s(e_k)$ denotes the cardinality of the hyperedge $e_k$.
A hypergraph is \emph{$s$-uniform} if all edges in $E$ have the same cardinality $s$, where $2$-uniform hypergraph is called a \emph{graph} $G$.
A \emph{configuration} or \emph{realization} of a hypergraph $H = (V,E)$ in  $\R$
is a mapping from the vertices of $H$ to the vectors in $\R$, i.e.\ $p: V \rightarrow \R$. 
When there is no ambiguity, we simply use $p_i$ to denote the vector $p(v_i)$, 
$p(e_k)$ to denote the set of vectors $\{p(v_i) | v_i \in e_k\}$,
and $s_k$ to denote the cardinality $s(e_k)$.

In the following,  we use $\langle P \rangle$ to denote the subspace \emph{spanned} by  a set $P$ of points in $\R$.

\begin{definition} [Pinned Subspace-Incidence System]
	A \emph{pinned subspace-incidence system} in  $\R$ is a pair $(H, Q)$, 
	where $H=(V,E,m)$ is a weighted hypergraph of rank  $r(H) < d$,
	and $Q = \{q_1, q_2, \ldots, q_{|E|}\}$ is a set of \emph{pinning subspaces} (subspaces of $\R$)
	in one-to-one correspondence with the hyperedges of $H$.
	Here the weight assignment is a function $m: E \rightarrow \mathbb{Z}^+$, 
	where  $m(e_k)$ denotes the dimension of the  pinning subspace $q_k$ associated with the hyperedge $e_k$. 
	We may write $m(q_k)$ or simply $m_k$ in substitute of $m(e_k)$.
	Often we ignore the weight $m$ and just refer to the hypergraph $(V,E)$ as $H$.
	
	A \emph{pinned subspace-incidence framework  realizing} the pinned subspace-incidence system  $(H,Q)$ is a triple $\framework$, 
	where $p$ is a realization of $H$,
	such that for all pinning subspaces $q_k \in Q$, 
	$q_k$  is contained in $\langle p(e_k) \rangle$, the subspace spanned by the set of vectors realizing the vertices of the  hyperedge $e_i$ corresponding to $q_k$.
	
\end{definition}

Since we only care about incidence relations, 
we projectivize the Euclidean space $\R$ to treat the pinned subspace-incidence system in the real projective space $\proj$,
and refer to the vectors $p_i$ realizing the vertices of $H$ as points in $\proj$ using the same notation
when the meaning is clear from the context.
Figure~\ref{fig:pinned_hypergraph} gives an example of a pinned subspace-incidence framework in the real projective space $\proj$ with $d=3$, where
the crosses denote the projectivized pins.
As each pinning subspace is spanned by two pins, $m(e) = 2$ for any hyperedge $e$.

\smallskip
\noindent\textbf{Note:} as the pinning subspaces in $Q$ are treated as globally fixed, the \emph{trivial motion group} of a pinned subspace-incidence system $(H,Q)$ reduces to the identity.

\begin{definition}
	A pinned subspace-incidence system $(H,Q)$ is \emph{independent} if none of the polynomial constraints is in the real ideal generated by others, 
	implying existence of a realization.
	It is \emph{rigid} if there exist at most finitely many realizations. 
	It is \emph{minimally rigid} if it is both rigid and independent. 
	It is \emph{globally rigid} if there exists at most one realization. 
	%
\end{definition}

\section{Algebraic Representation and Linearization}

In the following,  
we use $A[R,C]$ to denote a submatrix of a matrix $A$, 
where $R$ and $C$ are respectively index sets of the rows and columns contained in the submatrix.
In addition, $A[R,\Cdot]$ represents the submatrix containing row set $R$ and all columns, 
and $A[\Cdot,C]$ represents the submatrix containing column set $C$ and all rows.

\subsection{Representation Using Polynomials}

For any hyperedge  $e_k = \{v^k_1, v^k_2, \ldots, v^k_{|e_k|}\}$,
 the subspace $\langle p(e_k) \rangle$ spanned by the point set  $ \{p^k_1, p^k_2, \ldots, p^k_{|e_k|}\}$
is constrained to contain the pinning subspace $q_k$ associated with $e_k$.  
%
Recall that $q_k$ is a subspace of dimension $m_k - 1$ in $\proj$ spanned by a set of $m_k$ pins $\{x^k_1, x^k_2, \ldots, x^k_{m_k}\}$,
and the incidence constraint is equivalent to requiring $\langle p(e) \rangle$ to contain each pin $x^k_l$, for  $1 \le l \le m_k$.

\sloppy Using homogeneous coordinates
$p^k_i = [\begin{array}{cccc}p^k_{i,1} & p^k_{i,2} & \ldots & p^k_{i,d-1}\end{array}] $ and
$x^k_l = [ \begin{array}{cccc}x^k_{l,1} & x^k_{l,2} & \ldots & x^k_{l,d-1} \end{array}]$,
we write this incidence constraint for each pin $x^k_l$ 
by letting 
all the $|e_k| \times |e_k|$ minors of the $|e_k| \times (d-1) $ matrix
\[
E^k_l=
\left[
\begin{array}{c}
p^k_1 - x^k_l	\\
p^k_2 -x^k_l \\
\vdots \\
p^k_{|e_k|} - x^k_l 
\end{array} \right]
\]
\fussy
be zero. There are $d-1 \choose |e_k|$ minors, giving $d-1 \choose |e_k|$ equations. 
%
Note that any $d-|e_k|$ of these $d-1 \choose |e_k|$ equations are independent and span the rest.
So we can write the incidence constraint as $(d-|e_k|)$ independent equations: 
\begin{equation} \label{eq:constraint}
\det\left(E^k_l[\Cdot, C(t)]\right) = 0, \qquad 1 \le t \le d-|e_k|
\end{equation} 
where $C(t)$ denote the following index sets of columns in $E$:
\[
C(t) = \{1, 2, \ldots |e_k|-1\} \cup \{ |e_k|-1+t \}, 
\qquad 1 \le t \le d-|e_k|
\]
In other words, $C(t)$ contains the first $|e_k|-1$ columns together with column $|e_k|-1+t$.

Now
the incidence constraint for the pinning subspace $q_k$ is represented
as $m_k (d-|e_k|)$ equations
for all the $m_k$ pins $\{x^k_1, x^k_2, \ldots x^k_{m_k}\}$.
Consequently, the pinned subspace-incidence  problem reduces to solving a system 
of  $\sum_{k=1}^{|E|} m_k (d-|e_k|)$ equations, each of form~\eqref{eq:constraint}.  
We denote this algebraic system by $\algesystem = 0$.

\subsection{Linearization and Genericity}

We are interested in characterizing \emph{minimal rigidity} of pinned subspace-incidence systems.
However, checking independence relative to the ideal generated by the variety is computationally hard and best known algorithms, such as computing Gr\"{o}bner basis, are exponential in time and space~\cite{mittmann2007grobner}.
So we define the notion of rigidity for frameworks $(H,Q,p)$, which is equivalent to maximal rank of the Jacobian
of $\algesystem$. 

\begin{definition}
	A pinned subspace-incidence framework $\framework$ is  \emph{rigid} if there exists a neighborhood $N(p)$ such that  $\framework[p]$ is the only framework realizing $(H,Q)$ in $N(p)$.
	A rigid pinned subspace-incidence framework $\framework$ is \emph{minimally rigid} if it is no longer rigid after removing any pin.
\end{definition}

A \emph{generic} 
 framework with respect to a property $\mathcal{P}$, when viewed as a point in an appropriate real space, avoids a measure-zero set $\mathcal{N_\mathcal{P}}$ of the ambient space of frameworks that depends only on the underlying weighted hypergraph.
This implies the following notion of genericity:

\begin{definition}
	\label{def:generic_framework}
	A pinned subspace-incidence framework $\framework$ is \emph{generic} w.r.t.\ a property $\mathcal{P}$ if and only if 
	there exists a neighborhood $N(Q,p)$ such that for all frameworks $(H, Q', p')$ with $(Q',p') \in N(Q,p)$,
	$(H, Q', p')$ satisfies $\mathcal{P}$ if and only if $\framework$ satisfies $\mathcal{P}$.
\end{definition}

Furthermore we can define when a property is \emph{generic}, i.e.\ becomes a property of the hypergraph underlying a geometric constraint system.

\begin{definition}
	\label{def:generic_property}
	A property $\mathcal{P}$ is \emph{generic} (i.e, becomes a property of the underlying weighted hypergraph alone) if for any weighted hypergraph $H = (V,E,m)$, 
	either all generic (w.r.t.\ $\mathcal{P}$) frameworks $\framework$ satisfy $\mathcal{P}$, or all generic (w.r.t.\ $\mathcal{P}$) frameworks $\framework$ do not satisfy $\mathcal{P}$.
\end{definition}



The primary activity of the area of combinatorial rigidity is to
give purely combinatorial characterizations of generic properties $\mathcal{P}$. 
In practice, the set $\mathcal{N_\mathcal{P}}$ defining genericity is usually not specified, as long as it is of measure zero
in the ambient space of frameworks.
The measure-zero set $\mathcal{N_\mathcal{P}}$ 
may include zero-sets of some polynomials called \emph{pure conditions}
that appear in the process of drawing such combinatorial characterizations
(we will see this in the proof of Theorem~\ref{thm:rigidity_condition}).

\smallskip
We define the generic rigidity of a pinned subspace-incidence system to be the rigidity of a generic framework. 
\begin{definition}	
	A pinned subspace-incidence system $(H,Q)$ is \emph{generically (minimally) rigid} if some generic framework $\framework$ realizing $(H,Q)$ is (minimally) rigid.
\end{definition}

\subsubsection{Linearization as Rigidity Matrix}

Next we follow the approach taken by  traditional combinatorial rigidity theory~\cite{asimow1978rigidity,graver1993combinatorial} 
to show that rigidity and independence (based on nonlinear polynomials) of pinned subspace-incidence systems 
are generically properties of the underlying weighted hypergraph $H$,
and can furthermore be captured by linear conditions in an infinitesimal setting.
Specifically, we give a lemma showing that
generic rigidity of a pinned subspace-incidence system
is equivalent  to existence of a full rank  \emph{rigidity matrix}, 
obtained by taking the Jacobian of the algebraic system $\algesystem = 0$ at a generic framework $\framework$ realizing $(H,Q)$.

A \emph{rigidity matrix} of a framework $\framework$ is the  whose kernel is the infinitesimal motions (flexes) of $\framework$. 
A framework is \emph{infinitesimally rigid} if 
the rigidity matrix has full rank. 
%
To define a rigidity matrix for a pinned subspace-incidence framework $\framework$,
we take the Jacobian of the algebraic system $\algesystem = 0$
by taking partial derivatives with respect to the coordinates of $p_i$'s. 
In the Jacobian, each vertex $v_i$ has $d-1$  corresponding columns, 
and 
each pinning subspace $q_k$ associated with the hyperedge  $e_k = \{v^k_1, v^k_2, \ldots, v^k_{|e_k|} \}$
has $m_k(d-|e_k|)$ corresponding rows, 
where each equation $\det\left(E^k_l[\Cdot, C(t)]\right)=0$ \eqref{eq:constraint},
i.e.\ equation $t$ of the pin $x^k_l$,
gives the following row
(the columns corresponding to vertices not in $e_k$ are all zero): 

\fussy

{\small
	\begin{align}
	\label{eq:row}
	\hspace{-10pt}
	\bigg[\, 0,&\ldots,0, \frac{\partial \det E^k_l[\Cdot, C(t)] }{\partial p^k_{1,1}}, \frac{\partial \det E^k_l[\Cdot, C(t)] }{\partial p^k_{1,2}}, \ldots,\frac{\partial \det E^k_l[\Cdot, C(t)] }{\partial p^k_{1,d-1}}, 0,\ldots \notag \\ 
	&\ldots,0,\frac{\partial \det E^k_l[\Cdot, C(t)] }{\partial p^k_{2,1}}, \frac{\partial \det E^k_l[\Cdot, C(t)] }{\partial p^k_{2,2}},  \ldots, \frac{\partial \det E^k_l[\Cdot, C(t)] }{\partial p^k_{2,d-1}}, 0,\ldots \notag \\ 
	&\ldots \ldots \notag \\ 
	&\ldots, 0, \frac{\partial \det E^k_l[\Cdot, C(t)] }{\partial p^k_{|e_k|,1}}, 
	\frac{\partial \det E^k_l[\Cdot, C(t)] }{\partial p^k_{|e_k|,2}},  
	\ldots, \frac{\partial \det E^k_l[\Cdot, C(t)] }{\partial p^k_{|e_k|,d-1}}, 0,\ldots,0 \, \bigg	]
	\end{align}
}

Let $V^k$ be the matrix whose rows are coordinates of $p^k_1, p^k_2, \ldots, p^k_{|e_k|}$:
\[
\left[
\begin{array}{cccc}
p_{1,1} & p_{1,2} & \ldots & p_{1, d-1}\\
p_{2,1} & p_{2,2} & \ldots & p_{2, d-1}\\
\vdots & \vdots & \ddots & \vdots \\
p_{|e_k|,1} & p_{|e_k|,2} & \ldots & p_{|e_k|, d-1}
\end{array}
\right]
\]
Let $V^k_t$ be the $V^k[\Cdot, C(t)]$, i.e.\ the $|e_k| \times |e_k|$ submatrix of $V^k$ containing only columns in $C(t)$.
Let $V^k_{t,j}$ ($1 \le j \le d-1$) be the matrices obtained from $V^k_t$ by replacing the column corresponding to coordinate $j$ with the all-ones vector $(1, 1, \ldots, 1)$ for $j \in C(t)$, and the zero matrix for $j \notin C(t)$. 
Let $D^k_{t,j}$ be the determinant of $V^k_{t,j}$.
Let $b^{k,l}_i$ ($1 \le i \le |e_k|$) be the barycentric coordinates of the pin $x_l$ with respect to the points $p^k_i$,
i.e.\ $x_l  = \sum_{i=1}^{|e_k|} b^{k,l}_i p^k_i$.
%
%
Now \eqref{eq:row} can be rewritten in the following simplified form:


\begin{align}
r^k_{t,l} = \Big[\;0,\, &\ldots,\, 0,\, 0,\,   D^k_{t,1} b^{k,l}_1,\,  D^k_{t,2} b^{k,l}_1,\,  \ldots,\,  D^k_{t,d-1} b^{k,l}_1,\,  0,\, 0,\, \notag\\
&\ldots,\, 0,\, 0,\,  D^k_{t,1} b^{k,l}_2,\,  D^k_{t,2} b^{k,l}_2,\,  \ldots,\,  D^k_{t,d-1} b^{k,l}_2,\,  0,\, 0,\, \ldots \notag\\
&\ldots \ldots \notag\\
&\ldots,\,  0,\, 0,\,  D^k_{t,1} b^{k,l}_{|e_k|},\,  D^k_{t,2} b^{k,l}_{|e_k|},\,  \ldots,\,  D^k_{t,d-1} b^{k,l}_{|e_k|},\,  0,\, 0,\, \ldots,\, 0 \;\Big] \label{eq:row_pattern}
\end{align}

Each vertex $v^k_i$ has the entries $D^k_{t,j} b^{k,l}_i, 1 \le j \le d-1$ in its $d-1$ columns, among which exactly $|e_k|$ entries with $j \in C(t)$, i.e.\ the first $|e_k|-1$ columns together with column~$|e_k|-1+t$, are generically non-zero.
%
%
Note that 
the terms $D^k_{t, |e_k|-1+t}$  are equal for all $t$, 
so we may just use $D^k$ to denote it.

For each $1 \le t \le d-|e_k|$, 
there are $m_k$ rows  as \eqref{eq:row_pattern}, where each pin $x^k_l$ corresponds to the row $r^k_{t, l}$ for $1 \le l \le m_k$.
These $m_k$ rows have exactly the same row pattern 
except for different $b^{k,l}_i$'s:
\[
\def\arraystretch{1.4}
\scalemath{0.983}{
	\begin{blockarray}{*{12}{c}}
	& v_{1,1} & v_{1,2} & \ldots & v_{1,d-1} & & & v_{|e_k|,1} & v_{|e_k|,2} & \ldots & v_{|e_k|,d-1} & \\
	\begin{block}{[>{\hspace{0.1em}}*{12}{c}<{\hspace{0.1em}}]}
	\ldots & D^k_{t,1} b^{k,1}_1 & D^k_{t,2} b^{k,1}_1 & \ldots & D^k_{t,d-1} b^{k,1}_1 &
	\ldots & 
	\ldots & D^k_{t,1} b^{k,1}_{|e_k|} & D^k_{t,2} b^{k,1}_{|e_k|} & \ldots & D^k_{t,d-1} b^{k,1}_{|e_k|} &\ldots \\
	\ldots & D^k_{t,1} b^{k,2}_1 & D^k_{t,2} b^{k,2}_1 & \ldots & D^k_{t,d-1} b^{k,2}_1 &
	\ldots & 
	\ldots & D^k_{t,1} b^{k,2}_{|e_k|} & D^k_{t,2} b^{k,2}_{|e_k|} & \ldots & D^k_{t,d-1} b^{k,2}_{|e_k|} &\ldots \\
	\BAmulticolumn{12}{c}{\ddots}\\
	\ldots & D^k_{t,1} b^{k,m_k}_1 & D^k_{t,2} b^{k,m_k}_1 & \ldots & D^k_{t,d-1} b^{k,m_k}_1 &
	\ldots & 
	\ldots & D^k_{t,1} b^{k,m_k}_{|e_k|} & D^k_{t,2} b^{k,m_k}_{|e_k|} & \ldots & D^k_{t,d-1} b^{k,m_k}_{|e_k|} &\ldots \\
	\end{block}
	\end{blockarray}
}
\]

\begin{example}
	For $d=4$, consider a pinning subspace $q$ with $m(q) = 2$ associated with the hyperedge $e = \{v_1, v_2\}$.
	The pinning subspace has the following $m(q) \cdot (d - |e|) = 4$ rows in the simplified Jacobian (the index $k$ is omitted):
	\[
	\scalemath{0.983}{
		\begin{blockarray}{c@{\hspace{1.5em}}c@{\hspace{1.5em}}c@{\hspace{1.5em}}c@{\hspace{1.5em}}c@{\hspace{1.5em}}c@{\hspace{1.5em}}c@{\hspace{1.5em}}c@{\hspace{1.5em}}c@{\hspace{1.5em}}c@{\hspace{1.5em}}}
		& & v_{1,1} & v_{1,2} & v_{1,3} & & v_{2,1} & v_{2,2} & v_{2,3} & \\
		\begin{block}{c@{\hspace{1.5em}}[c@{\hspace{1.5em}}c@{\hspace{1.5em}}c@{\hspace{1.5em}}c@{\hspace{1.5em}}c@{\hspace{1.5em}}c@{\hspace{1.5em}}c@{\hspace{1.5em}}c@{\hspace{1.5em}}c@{\hspace{1.5em}}]}
		t=1, l=1 & \quad\ldots & D_{1,1} b^1_1& D b^1_1& 0 & \ldots & D_{1,1} b^1_2 & D b^1_2 & 0 & \ldots\quad \\
		t=1, l=2 & \quad\ldots & D_{1,1} b^2_1 & D b^2_1 & 0 & \ldots & D_{1,1} b^2_2 & D b^2_2 & 0 & \ldots\quad \\
		t=2, l=1 & \quad\ldots & D_{2,1} b^1_1& 0 & D b^1_1& \ldots & D_{2,1} b^1_2 & 0 & D b^1_2 & \ldots\quad \\
		t=2, l=2 & \quad\ldots & D_{2,1} b^2_1 & 0 & D b^2_1 & \ldots & D_{2,1} b^2_2 & 0 & D b^2_2 & \ldots\quad \\
		\end{block}
		\end{blockarray} 
	}
	\]
\end{example}

We define the \emph{rigidity matrix} $M\framework$ or simply $M(p)$ for a pinned subspace-incidence framework $\framework$ to be the simplified Jacobian matrix obtained above, where each row has form \eqref{eq:row_pattern}.
It is a matrix of size $\sum_k m_k (d - |e_k|)$ by $n(d-1)$.
In general, we use $M$ to denote the rigidity matrix of a generic framework $\framework$ with respect to infinitesimal rigidity.


\begin{lemma}
	\label{lem:generic}
	Infinitesimal rigidity of a generic subspace-incidence framework $\framework$ is equivalent to rigidity of $\framework$,  thus generic rigidity of the system $(H,Q)$.
\end{lemma}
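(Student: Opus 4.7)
The plan is to follow the classical Asimow--Roth paradigm~\cite{asimow1978rigidity}, adapted to our setting where the trivial motion group is trivial (since the pinning subspaces $Q$ are globally fixed). I would view the solution set of $\algesystem=0$ as a real algebraic variety $V_{(H,Q)} \subset \R[n(d-1)]$ in the ambient space of configurations $p$ for a fixed $Q$, and observe that the rigidity matrix $M(p)$ constructed above is, up to column rescaling that does not affect rank, exactly the Jacobian of $\algesystem$ at $p$.

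First I would handle the easy direction: if $M(p)$ has full column rank $n(d-1)$ at the framework $\framework$, then the implicit function theorem applied to $\algesystem$ forces $p$ to be an isolated point of $V_{(H,Q)}$, so $\framework$ is rigid in the sense of the definition given. Note that because $Q$ is globally fixed, the trivial motion group is the identity, and ``full rank'' simply means full column rank $n(d-1)$ with no quotient to take; this is what simplifies the argument relative to the Euclidean distance case.

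For the converse, I would use that the entries of $M(p)$ are polynomials in the coordinates of $p$ and of the pins in $Q$. Thus the set $U$ of frameworks at which $M$ attains its maximum rank $r^\ast$ is Zariski open, and its complement is a proper subvariety, hence a measure-zero set in the sense of Definition~\ref{def:generic_property}. At any $\framework \in U$, the rank is locally constant, so by the constant rank theorem a neighborhood of $p$ in $V_{(H,Q)}$ is a smooth manifold of dimension $n(d-1) - r^\ast$. If a generic framework $\framework$ is rigid, this local dimension must be $0$, forcing $r^\ast = n(d-1)$, i.e.\ every framework in $U$, and in particular every generic framework, is infinitesimally rigid. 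Since $r^\ast$ is determined by $H$ alone (with $Q,p$ generic), infinitesimal rigidity is a generic property of $H$, and thus so is rigidity, which gives the ``generic rigidity of $(H,Q)$'' conclusion.

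The main obstacle is ensuring the two notions of ``generic'' align properly: the measure-zero exceptional set $\mathcal{N}_\mathcal{P}$ of Definition~\ref{def:generic_framework} must be chosen to contain both the loci where $M(p)$ drops below $r^\ast$ (zero-set of the relevant maximal-minor polynomials) and the degenerate loci where, for instance, pins coincide, the barycentric coordinates $b^{k,l}_i$ are undefined, or the determinants $D^k_{t,j}$ vanish. This bookkeeping of exceptional polynomials — which will reappear as the pure condition in Theorem~\ref{thm:rigidity_condition} — is where most of the care lies; once those are absorbed into $\mathcal{N}_\mathcal{P}$, the equivalence and the generic property follow from the rank argument above.
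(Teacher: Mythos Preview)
Your proposal is correct and follows essentially the same Asimow--Roth approach as the paper. The only minor difference is in the converse direction: the paper explicitly constructs a finite flex by applying the Implicit Function Theorem to an $r\times r$ invertible submatrix of $M$ and parameterizing over the remaining coordinates, whereas you invoke the constant rank theorem to deduce that the solution set is locally a manifold of dimension $n(d-1)-r^\ast$; these are equivalent formulations of the same idea, and your added remarks on aligning the genericity loci are a reasonable elaboration of what the paper leaves implicit.
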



\begin{proof}
	The proof of Lemma~\ref{lem:generic} follows the approach taken by traditional combinatorial rigidity~\cite{asimow1978rigidity}.
	
	First we show that if a framework $\framework$ is generic, infinitesimal rigidity implies rigidity.
	Consider the polynomial system $\algesystem$ of equations. 
	The Implicit Function Theorem states that 
	there exists a function $g$, such that $p=g(Q)$ on some open interval, if and only if the rigidity matrix $M$ has full rank. 
	Therefore, if the framework is infinitesimally rigid, the solutions to the algebraic system are isolated points (otherwise $g$ could not be explicit). 
	Since the algebraic system contains finitely many components, there are only finitely many such solution 
	and each solution is a $0$ dimensional point. This implies that the total number of solutions is finite,
	which is the definition of rigidity.
	
	To show that generic rigidity implies generic infinitesimal rigidity, we take the contrapositive: if a generic framework is not infinitesimally rigid,
	we show that there is a finite flex. 
	If $\framework$ is not infinitesimally rigid,
	then the rank $r$ of the rigidity matrix ${M}$ is less than $(d-1)|V|$. 
	Let $E^*$ be a set of edges in $H$ such that $|E^*|=r$ and the corresponding rows in $M$ are all independent.
	In ${M}[E^*, \Cdot]$, we can find $r$ independent columns. 
	Let $p^*$ be the components of $p$ corresponding to those $r$ independent columns and $p^{*\perp}$ be the remaining components.
	The $r$-by-$r$ submatrix ${M}[E^*, p^*]$, made up of the corresponding independent rows and columns, is invertible. 
	Then, by the Implicit Function Theorem, in a neighborhood of $p$ there exists a continuous and differentiable function $g$ such that $p^*=g(p^{*\perp})$.
	This identifies $p'$, whose components are $p^*$ and the level set of $g$ corresponding to $p^*$, such that $(H,Q)(p')=0$. The level set
	defines the finite flexing of the framework. Therefore the system is not rigid.
\end{proof}

\section{Combinatorial Rigidity Characterization}

\subsection{Required Hypergraph Properties}
\label{sec:graph}

This section introduces pure hypergraph properties and definitions that will be used in stating and proving our main theorem.

\begin{definition} \label{def:tightness}
	A hypergraph $H=(V,E)$ is $(k,0)$-sparse if for any ${V'} \subset V$, 
	the induced subgraph ${H'}=({V'},{E'})$ satisfies $|{E'}| \leq k|{V'}|$. 
	A hypergraph $H$ is $(k,0)$-tight if $H$ is $(k,0)$-sparse and $|E| = k|V|$.
\end{definition}

This is a special case of the $(k,l)$-sparsity condition 
that was widely studied in the geometric constraint solving and combinatorial rigidity literature.
A relevant concept from graph matroids is \emph{map-graph}, defined as follows.

\begin{definition}
	An \emph{orientation} of a hypergraph is given by identifying as the \emph{tail} of each edge one of its endpoints.
	The \emph{out-degree} of a vertex is the number of edges which identify it as the tail and connect $v$ to $V - v$.
	A \emph{map-graph} is a hypergraph that admits an orientation such that the out degree of every vertex is
	exactly one. 
\end{definition}

The following lemma from~\cite{streinu2009sparse} follows Tutte-Nash Williams~\cite{nash1961edge,tutte1961problem}
to give a useful characterization of $(k,0)$-tight graphs in terms of maps.

\begin{lemma} 
	\label{lem:map_decomposition}
	A hypergraph $H$ is composed of $k$ edge-disjoint map-graphs if and only if $H$ is $(k,0)$-tight.
\end{lemma}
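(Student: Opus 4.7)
The plan is to prove each direction separately, following the classical Tutte--Nash-Williams approach adapted to the hypergraph / map-graph setting.

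\textbf{Easy direction ($\Leftarrow$ is actually the $(\Rightarrow)$ here).} Assume $H$ decomposes as a disjoint union of $k$ map-graphs $M_1,\ldots,M_k$, each carrying an orientation with out-degree exactly one at every vertex. Fix $V'\subseteq V$ and let $H'=(V',E')$ be the induced sub-hypergraph. For each $i$, every hyperedge of $M_i$ contained in $V'$ identifies a unique tail in $V'$, and distinct hyperedges of $M_i$ have distinct tails. Hence $M_i$ contributes at most $|V'|$ hyperedges to $E'$, and summing over $i$ gives $|E'|\le k|V'|$. Taking $V'=V$ the out-degree constraint is an equality at every vertex, so $|E|=k|V|$; thus $H$ is $(k,0)$-tight.

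\textbf{Hard direction.} Given a $(k,0)$-tight hypergraph $H$, I would produce the decomposition via matroid union. Define the \emph{map matroid} $\mathcal{M}_H$ on the ground set $E$ by declaring $F\subseteq E$ independent if the sub-hypergraph $(V(F),F)$ admits an orientation with out-degree at most one at every vertex, equivalently if $F$ is $(1,0)$-sparse. One checks that this yields a matroid (the hypergraph analogue of the pseudoforest / bicircular matroid) whose rank function satisfies $r_{\mathcal{M}_H}(F)\le |V(F)|$, with equality whenever $F$ itself is a map-graph on $V(F)$. Apply Edmonds' matroid union theorem to the $k$-fold union $\mathcal{M}_H^{\vee k}$: its rank equals
\[
\min_{F\subseteq E}\bigl(|E\setminus F|+k\cdot r_{\mathcal{M}_H}(F)\bigr).
\]
The $(k,0)$-sparsity hypothesis $|E\setminus F|+k|V(F)|\ge |E\setminus F|+|F|=|E|$ for every $F$ implies the minimum is exactly $|E|$, so $E$ itself is independent in the union. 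This partitions $E$ into $k$ edge-disjoint sub-map-graphs, and tightness $|E|=k|V|$ forces each part to have exactly $|V|$ hyperedges, hence to be a full map-graph on $V$.

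\textbf{Main obstacle.} The nontrivial step is verifying that the proposed independence notion really defines a matroid in the hypergraph setting and pinning down the correct rank formula. In the ordinary graph case this is the classical pseudoforest matroid; for hypergraphs with variable edge cardinality one has to be careful that the orientation-based definition (each hyperedge identifies a single tail) still satisfies the augmentation axiom. I would either invoke the hypergraph $(k,\ell)$-sparsity matroid machinery of Streinu--Theran~\cite{streinu2009sparse} directly, or give a short self-contained augmenting-path argument: starting from an arbitrary orientation, repeatedly reverse a directed path from an overloaded vertex to an under-loaded one to rebalance out-degrees, using the sparsity bound to guarantee termination. Once the matroid union framework is in place the rest is bookkeeping with the sparsity and tightness counts.
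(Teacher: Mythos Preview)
The paper does not actually prove this lemma: it is stated with the preamble ``The following lemma from~\cite{streinu2009sparse} follows Tutte--Nash-Williams~\cite{nash1961edge,tutte1961problem}\ldots'' and then used as a black box. So there is no in-paper proof to compare against; the authors simply import the result from Streinu--Theran.

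Your sketch is essentially the standard argument underlying that cited result, so in spirit you are reproducing exactly what the paper is invoking. One technical point to tighten: in the matroid-union step you pass from ``$k\cdot r_{\mathcal{M}_H}(F)\ge |F|$'' to ``$k|V(F)|\ge |F|$'' as if $r_{\mathcal{M}_H}(F)=|V(F)|$, but the rank of an arbitrary edge set $F$ in the $(1,0)$-sparsity matroid is not simply $|V(F)|$ (it depends on the component structure of $F$). What you actually need is the inequality $r_{\mathcal{M}_H}(F)\ge |F|/k$, which does follow once you know $F$ is $(k,0)$-sparse and the map matroid has the right rank formula on each connected piece---but that is precisely the content you flag as the ``main obstacle,'' so you should not use it before establishing it. The cleanest fix is the one you already suggest: cite the hypergraph sparsity-matroid construction of~\cite{streinu2009sparse} for both the matroid axioms and the rank formula, at which point the Edmonds union computation goes through verbatim.
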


Our characterization of rigidity of a weighted hypergraph $H$ is based on map-decomposition of
a \emph{multi-hypergraph} $\widehat{H}$ obtained from $H$.

\begin{definition}
	Given a weighted hypergraph $H = (V, E, m)$,
	the associated \emph{multi-hypergraph} $\widehat{H} = (V, \widehat{E})$ is obtained by 
	replacing each hyperedge $e_k$ in $E$ with a set ${E}^k$ of $m_k (d-|e_k|)$ copies of \emph{multi-hyperedges}.
\end{definition}

A \emph{labeling} of a multi-hypergraph $\widehat{H}$ gives a one-to-one correspondence between $E^k$ and 
the set $R^k$ of $m_k (d-|e_k|)$ rows for the hyperedge $e_k$ in the rigidity matrix $M$,
where the multi-hyperedge corresponding to the row $r^k_{t,l}$ is labeled $e^k_{t,l}$.


\smallskip
\noindent\textbf{Note:} 
an alternative representation commonly adopted in geometric constraint solving~\cite{ait1993reduction,fudos1997graph,hoffmann1997symbolic} is 
to represent $H$ as a bipartite graph $B(H)$, with $d-1$ copies of each vertex in $V$ as one of its vertex set, and $m_k (d-|e_k|)$ copies of each hyperedge in $E$ as the other vertex set.
A combinatorial rigidity characterization can be equivalently stated using either flow based conditions on $B(H)$
or hypergraph sparsity conditions on $\widehat{H}$~\cite{hoffmann1997finding,hoffmann1998geometric,jacobs1997algorithm}.
However, such an equivalence of the two combinatorial properties has no bearing on  the proof of equivalence of either combinatorial property to generic rigidity, an algebraic property.  
Showing that the combinatorial property generically implies the algebraic property is  the substance of the proof of the theorem.
This is generally called the ``Laman direction''  and it is in fact where
the hardness of every combinatorial characterization of rigidity lies.

\subsection{Characterizing Rigidity}
\label{sec:proof}

In this section, we apply~\cite{white1987algebraic} to give combinatorial characterization for minimal rigidity of pinned subspace-incidence systems.

\begin{theorem}[main theorem] \label{thm:rigidity_condition}
	A pinned subspace-incidence system with pins being in general position is generically minimally rigid 
	if and only if:
	\begin{enumerate}

		\item [(1)]
		The underlying weighted hypergraph $H = (V,E, m)$ satisfies
		$\sum_{k=1}^{|E|} m_k(d-|e_k|) = (d-1)|V|$, 
		and $\sum_{e_k \in E'} m_k(d-|e_k|) \le (d-1)|V'|$ for every vertex induced subgraph $H'=(V',E')$.
		In other words, the associated multi-hypergraph $\widehat{H} = (V, \widehat{E})$ has a decomposition into $(d-1)$ maps. 
		
		\item [(2)] There exists a labeling of $\widehat{H}$ \emph{compatible with the map-decomposition (defined later)}  such that in each set $E^k$ of multi-hyperedges,
		
		(2a) two multi-hyperedges $e^k_{t_1,l_1}$ and $e^k_{t_2,l_2}$ with $l_1 = l_2$
		are not contained in the same map in the map-decomposition, 
		
		(2b) two multi-hyperedges $e^k_{t_1,l_1}$ and $e^{k}_{t_2,l_2}$ with $ t_1=t_2$ do not have the same vertex as tail in the map-decomposition. 
		
	\end{enumerate}
	
\end{theorem}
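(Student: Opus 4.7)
My plan is to reduce, via Lemma~\ref{lem:generic}, to a rank statement about the rigidity matrix $M$, and then to compute $\det M$ using a Laplace decomposition along the $(d-1)$-map structure, in the style of White and Whiteley. Since a generic framework is rigid if and only if $M$ has full rank, and independent if and only if its rows are independent, minimal rigidity of $(H,Q)$ is equivalent to $M$ being square with $\det M$ a nonzero polynomial in the pin coordinates and the vertex realizations. The theorem then splits into two parts: condition~(1) should govern when $M$ is square and avoids over-determined sub-matrices, and condition~(2) should govern when $\det M$ is not identically zero as a polynomial.

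\textbf{Counting and map-decomposition.} For condition~(1), each pinning subspace $q_k$ contributes $m_k(d-|e_k|)$ rows and each vertex contributes $d-1$ columns, so squareness is exactly $\sum_k m_k(d-|e_k|)=(d-1)|V|$. An induced sub-hypergraph $H'$ that violated the sparsity inequality would contribute strictly more rows than its available columns, forcing a dependency on the restricted rigidity matrix and hence dependence of $(H,Q)$ itself. By Lemma~\ref{lem:map_decomposition}, this $(d-1,0)$-tightness of the associated multi-hypergraph $\widehat H$ is equivalent to decomposing $\widehat H$ into $d-1$ edge-disjoint maps, which is condition~(1).

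\textbf{Laplace expansion and the pure condition.} Assuming~(1), I fix a $(d-1)$-map decomposition of $\widehat H$ and partition the $(d-1)|V|$ columns of $M$ into $d-1$ blocks, each block containing exactly one of the $d-1$ columns belonging to every vertex $v$. Laplace expansion writes $\det M$ as a signed sum of products of $d-1$ block determinants, one per map. Within a single map, the pure-map property gives a bijection between multi-hyperedges $e^k_{t,l}$ and their tail vertices; using the row formula~\eqref{eq:row_pattern}, each block determinant factors as a product of the barycentric scalars $b^{k,l}_i$ along this bijection times a sub-determinant built from the $D^k_{t,j}$ entries for the columns selected in that map's block. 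I will then argue that condition~(2a) is precisely what prevents two multi-hyperedges in the same hyperedge-class with the same pin index $l$ from landing in the same map (which would force two proportional barycentric row-vectors), and that condition~(2b) is precisely what prevents two multi-hyperedges in the same class with the same $t$ from sharing a tail (which would force two equal $D^k_{t,\cdot}$ columns in the block). Hence whenever a labeling satisfying (2a)--(2b) exists, at least one Laplace term is a nonzero polynomial, whose zero set is the pure condition that generic pin positions avoid; conversely, in the absence of such a labeling every Laplace term carries one of these forced collisions and $\det M \equiv 0$.

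\textbf{Main obstacle.} The delicate step is the analysis of the single-map block determinant, because $\det M$ mixes two polynomial ingredients living in different variables: the barycentric coordinates $b^{k,l}_i$ (functions of pin positions) and the minors $D^k_{t,j}$ (functions of vertex positions). One must design the intra-map column assignment so that (2b) renders the $D^k_{t,j}$-submatrix non-singular --- essentially yielding a triangular or monomial pattern on the diagonal $D^k = D^k_{t,|e_k|-1+t}$ entries --- and so that the resulting product, read off as the pure condition, is manifestly a nonzero polynomial rather than being annihilated by a hidden cancellation across Laplace terms. Making this decoupling rigorous, and then verifying that the combinatorial conditions (2a)--(2b) capture exactly these two obstructions, is where the bulk of the work lies.
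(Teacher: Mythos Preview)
Your plan is the paper's proof: Lemma~\ref{lem:generic} reduces the question to the rank of $M$; the columns of $M$ are grouped by coordinate into $d-1$ blocks; Laplace expansion along this grouping identifies the nonvanishing summands with $(d-1)$-map decompositions of $\widehat H$; and the resulting polynomial~\eqref{eq:pure_cond} is the pure condition.

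Two small corrections to your description that will matter when you fill in details. First, the factorization within a block goes the other way: in column group $C_j$ each row $r^k_{t,l}$ carries a common scalar $D^k_{t,j}$, so one pulls the $D$'s out and is left with the matrix $M'[R^\sigma_j,C_j]$ whose nonzero rows are the barycentric vectors~\eqref{eq:submat_rowpattern}; it is this $b$-matrix whose determinant is shown nonzero via the tail bijection, and~(2a) is precisely what prevents two proportional $b$-rows in one block. Second, condition~(2b) is not an intra-block column collision. Two multi-hyperedges $e^k_{t,l_1},e^k_{t,l_2}$ with the same $t$ must lie in \emph{different} maps, and the paper uses~(2b) in the only-if direction via a global row-elimination argument (reducing $e^k_{t,l_1}$ to a pivot at vertex $i$ simultaneously zeroes the $i$-entry of $e^k_{t,l_2}$ across all column groups, so $i$ cannot be its tail), and in the if direction to argue that distinct Laplace summands do not cancel. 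Your ``main obstacle'' paragraph has the right instinct about decoupling the $D$-variables from the $b$-variables, but the mechanism by which~(2b) enters is cross-block, not within a single block.
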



To prove Theorem~\ref{thm:rigidity_condition}, we apply Laplace expansion to the determinant of the rigidity matrix $M$, 
which corresponds to decomposing the $(d-1,0)$-tight multi-hypergraph $\widehat{H}$ as a union of $d-1$ maps.
We then prove $\det (M)$ is not identically zero by showing that the minors corresponding to each map are not identically zero, as long as a certain polynomial called \emph{pure condition} is avoided by the framework. 

A Laplace expansion rewrites the determinant of the rigidity matrix $M$ 
as a sum of products of determinants (brackets) representing each of the coordinates taken separately. 
In order to see the relationship between the Laplace expansion and the map-decomposition, we first group the columns of $M$ into $d-1$ column groups $C_j$ according to the coordinates, 
where columns for the first coordinate of each vertex belong to $C_1$, 
columns for the second coordinate of each vertex belong to $C_2$, etc. 


\begin{example} 
	For $d=4$, consider a pinning subspace $q$ with $m(q)=2$ associated with the hyperedge $e = \{v_1, v_2\}$.
	The regrouped rigidity matrix has $d-1 = 3$ column groups, where 
	$q$ has the following $4$ rows (the index $k$ is omitted):
	\newlength{\mylength}
	\setlength{\mylength}{0.57em}
	\newlength{\mylengtha}
	\setlength{\mylengtha}{0.3em}
	\[
	\def\arraystretch{1.2}
	\scalemath{0.983}{
		\begin{blockarray}{c@{\hspace{0.8em}}c@{\hspace{\mylength}}c@{\hspace{\mylength}}c@{\hspace{\mylength}}c@{\hspace{\mylength}}c@{\hspace{\mylength}}c@{\hspace{\mylength}}c@{\hspace{\mylength}}c@{\hspace{\mylength}}c@{\hspace{\mylength}}c@{\hspace{\mylength}}c@{\hspace{\mylength}}c@{\hspace{\mylength}}c@{\hspace{\mylength}}c@{\hspace{\mylength}}c@{\hspace{\mylength}}}
		&  & v_{1,1} & & v_{2,1} & & & v_{1,2} && v_{2,2} & & & v_{1,3} && v_{2,3} & \\
		\begin{block}{c@{\hspace{0.8em}}[c@{\hspace{\mylength}}c@{\hspace{\mylength}}c@{\hspace{\mylength}}c@{\hspace{\mylength}}c@{\hspace{\mylengtha}}|@{\hspace{\mylengtha}}c@{\hspace{\mylength}}c@{\hspace{\mylength}}c@{\hspace{\mylength}}c@{\hspace{\mylength}}c@{\hspace{\mylengtha}}|@{\hspace{\mylengtha}}c@{\hspace{\mylength}}c@{\hspace{\mylength}}c@{\hspace{\mylength}}c@{\hspace{\mylength}}c@{\hspace{\mylength}}]}
		t=1,l=1 & \;\ldots & D_{1,1}b^1_1 & \ldots & D_{1,1}b^1_2 & \ldots & \ldots & Db^1_1& \ldots & Db^1_2 & \ldots & &&&& \\
		t=1,l=2 & \;\ldots & D_{1,1} b^2_1 & \ldots & D_{1,1} b^2_2 & \ldots & \ldots & D b^2_1 & \ldots & D b^2_2 & \ldots & &&&& \\
		t=2,l=1 & \;\ldots & D_{2,1} b^1_1& \ldots & D_{2,1} b^1_2 & \ldots & &&&& & \ldots & D b^1_1& \ldots  & D b^1_2 & \ldots \; \\
		t=2,l=2 & \;\ldots & D_{2,1} b^2_1 & \ldots & D_{2,1} b^2_2 & \ldots & &&&& & \ldots & D b^2_1 & \ldots  & D b^2_2 & \ldots \; \\
		\end{block}
		\end{blockarray} 
	}
	\]
\end{example}

We have the following observation on the pattern of the regrouped rigidity matrix.
\begin{observation}
	\label{obs:mat_pattern}
	In the rigidity matrix $M$ with columns grouped into column groups,
	a hyperedge $e_k$ has $m_k(d-|e_k|)$ rows, each associated with a  multi-hyperedge of $e_k$ in $\widehat{H}$. 
	In a column group $j$ where $j \le |e_k|-1$, each row associated with $e_k$
	contains $|e_k|$ nonzero entries at the columns corresponding to vertices of $e_k$.
	In a column group $j$ where $j \ge |e_k|$, 
	there are $m_k$ rows $r^k_{t,l}$ with $|e_k| - 1 + t = j$, each containing $|e_k|$ nonzero entries at the columns corresponding to vertices of $e_k$,
	while  the remaining rows  are all zero.
\end{observation}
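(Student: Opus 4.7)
The plan is to verify the observation directly by unpacking the row pattern \eqref{eq:row_pattern} after regrouping columns by coordinate. The claim is purely a bookkeeping statement about which entries of $M$ are generically nonzero, so the proof reduces to a case analysis on the column group index $j$ based on membership in the index set $C(t) = \{1, \ldots, |e_k|-1\} \cup \{|e_k|-1+t\}$.

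First I would record three structural facts about a single row $r^k_{t,l}$ of $M$ coming from \eqref{eq:row_pattern}. (i) The only columns that can carry a nonzero entry are those associated with the vertices $v^k_1, \ldots, v^k_{|e_k|}$ of the hyperedge $e_k$, because the equation $\det(E^k_l[\Cdot,C(t)]) = 0$ from which $r^k_{t,l}$ is obtained involves only the points $p^k_i$ for $v^k_i \in e_k$; hence all columns indexed by vertices outside $e_k$ are identically zero. (ii) Within the $d-1$ columns belonging to a vertex $v^k_i \in e_k$, the entry in position $j$ equals $D^k_{t,j}\, b^{k,l}_i$. (iii) By the very definition of $V^k_{t,j}$, the scalar $D^k_{t,j}$ is identically zero when $j \notin C(t)$ (its defining matrix has a zero column), while for $j \in C(t)$ it is a polynomial in the coordinates of $p^k_1, \ldots, p^k_{|e_k|}$ that is not identically zero, hence generically nonzero.

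With these in hand, the two cases of the observation are immediate. If $j \le |e_k|-1$, then $j \in C(t)$ for every admissible $t$, so every one of the $m_k(d-|e_k|)$ rows $r^k_{t,l}$ contributes $|e_k|$ generically nonzero entries $D^k_{t,j} b^{k,l}_i$ at the column positions corresponding to $v^k_1, \ldots, v^k_{|e_k|}$ in column group $j$. If $j \ge |e_k|$, then $j \in C(t)$ forces $|e_k|-1+t = j$, which uniquely determines $t = j - |e_k| + 1$; for this single value of $t$ the $m_k$ rows $r^k_{t,l}$ (one per pin $x^k_l$, $l = 1,\ldots, m_k$) each contribute $|e_k|$ nonzero entries at the $e_k$-vertex positions, while for every other $t' \ne j - |e_k| + 1$ in $\{1, \ldots, d-|e_k|\}$ we have $j \notin C(t')$, so $D^k_{t',j} = 0$ and the corresponding rows are entirely zero in column group $j$.

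The only delicate point, and what I expect to be the main obstacle, is justifying the generic nonvanishing of the coefficients $D^k_{t,j}$ (for $j \in C(t)$) and the barycentric coordinates $b^{k,l}_i$. Both quantities are polynomial functions of the framework data $(Q,p)$ that are not identically zero, so each vanishes only on a proper Zariski-closed subset of the configuration space; taking a generic framework together with the hypothesis that the pins are in general position with respect to the realized points places us in the complement of the union of these finitely many exceptional sets, and all the claimed nonzero entries are indeed nonzero.
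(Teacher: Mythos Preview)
Your proposal is correct and is exactly the straightforward verification the paper leaves implicit: the observation is stated without proof in the paper, as it follows immediately from the row pattern \eqref{eq:row_pattern} and the definition of $C(t)$, and your case analysis on whether $j\in C(t)$ is precisely that unpacking. The genericity remark at the end is appropriate and matches the paper's standing assumptions.
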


A \emph{labeling of $\widehat{H}$ compatible with a given map-decomposition} can be obtained as following.
We start from the last column group of $M$ and associate each column group $j$ with a map in the map-decomposition.
For each multi-hyperedge of the map that is a copy of the hyperedge $e_k$, 
we pick a row $r^k_{t,j}$ that is not all zero in column groups $j$ and label the multi-hyperedge as $e^k_{t,j}$.
By the above observation, this is always possible if each map contains at most $m_k$ multi-hyperedges of the same hyperedge $e_k$, which must be true if there exists any labeling of $\widehat{H}$ satisfying  Theorem~\ref{thm:rigidity_condition}(2a).

In the Laplace expansion
\begin{equation}
\det({M}) = \sum_{\sigma} \left( \pm \prod_j \det {M}[R_j^\sigma, C_j] \right) \label{eq:laplace}
\end{equation}
the sum is taken over all partitions $\sigma$ of the rows into $d-1$ subsets
$R^\sigma_1, R^\sigma_2,$ $\ldots,$ $R^\sigma_j,$ $\ldots,$ $R^\sigma_{d-1}$, each of size $|V|$.
In other words, each summation term of~\eqref{eq:laplace} contains $|V|$ rows $R_j^\sigma$ from each column group $C_j$.
Observe that for any submatrix ${M}[R^\sigma_{j}, C_{j}]$, each row has a common coefficient $D^k_{t,j}$, so
\[
\det({M}[R^\sigma_{j}, C_{j}]) = \left(\prod_{r^k_{t,j} \in R^\sigma_j} D^k_{t,j} \right) \det(M'[R^\sigma_{j}, C_{j}])
\]
where each row of $M'[R^\sigma_{j}, C_{j}]$ is either all zero, or of the pattern
\begin{equation}
[0,\ldots, 0, b^{k,l}_1, b^{k,l}_2, 0, \ldots ,  b^{k,l}_{|e_k|}, 0, \ldots, 0] \label{eq:submat_rowpattern}
\end{equation}
with non-zero entries only at the $|e_k|$ indices corresponding to $v^k_i \in e_k$.

For a fixed $\sigma$, we refer to a submatrix  ${M}[R^\sigma_{j}, C_{j}]$ simply as $M_j$.

\begin{example}
	Figure~\ref{fig:d3_rigid} shows a pinned subspace-incidence system in $d=3$ with 4 vertices and 5 hyperedges, 
	where $e_1 = \{v_1\}, e_2 = \{v_2\}, e_3 = \{v_1, v_3\}, e_4 = \{v_2, v_4\}, e_5=\{v_3,v_4\}$, 
	and $m_k = 1$ for all $1 \le k \le 5$ except that $m_5 = 2$.
	Figure~\ref{fig:d3_rigid_map} gives a map-decomposition of the multi-hypergraph $\widehat{H}$ of (a).
	The labeling of multi-hyperedges is given in the  regrouped rigidity matrix  \eqref{eq:map_decomp}, 
	where the shaded rows inside the column groups constitute the submatrices $M^\sigma_j$ in the summation term of the Laplace decomposition corresponding to the map-decomposition.
	The system is generically minimally rigid, as the map-decomposition and labeling of $\widehat{H}$ satisfies the conditions of Theorem~\ref{thm:rigidity_condition}.
	\begin{figure}
		\centering
		\begin{subfigure}{.55\linewidth}
			\centering
			\includegraphics[width=.95\linewidth]{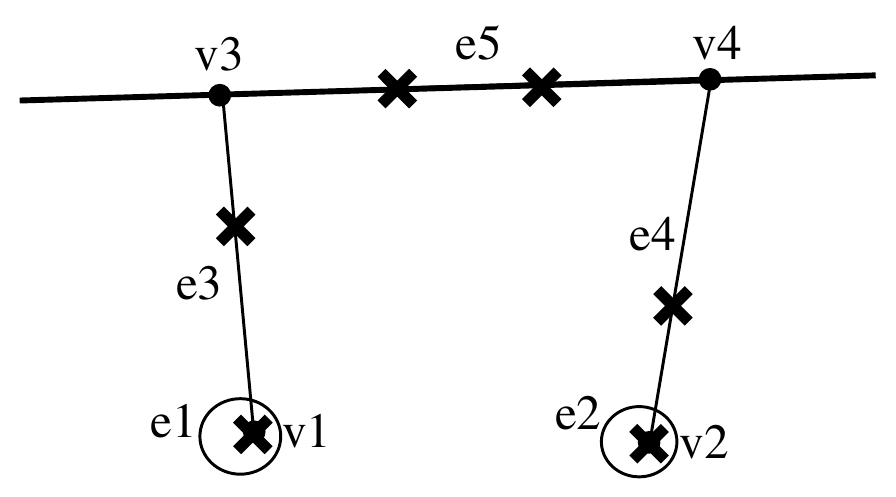}
			\caption{}
			\label{fig:d3_rigid}
		\end{subfigure}%
		\begin{subfigure}{.45\linewidth}
			\centering
			\includegraphics[width=.8\linewidth]{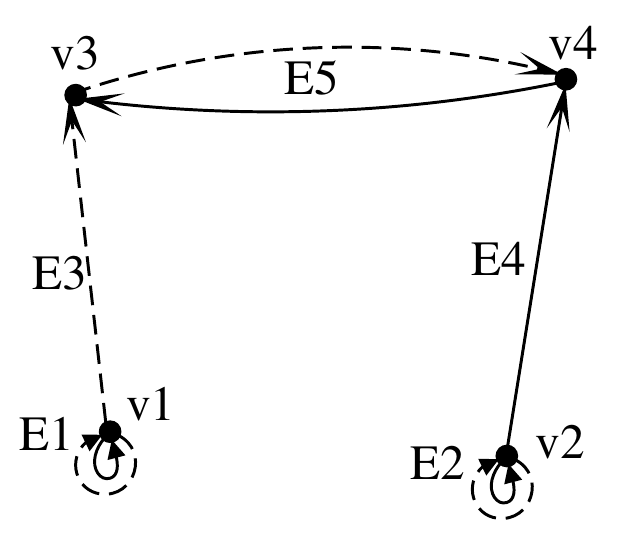}
			\caption{}
			\label{fig:d3_rigid_map}
		\end{subfigure}
		\caption{(a) A minimally rigid pinned subspace-incidence system in $d=3$.  
			(b) A map-decomposition of the multi-hypergraph of the system in (a), where differently drawn multi-hyperedges are in different maps, and the tail vertex of each multi-hyperedge is pointed to by an arrow.
		}
	\end{figure}

	\begin{equation*}
		\includegraphics[width=.95\linewidth]{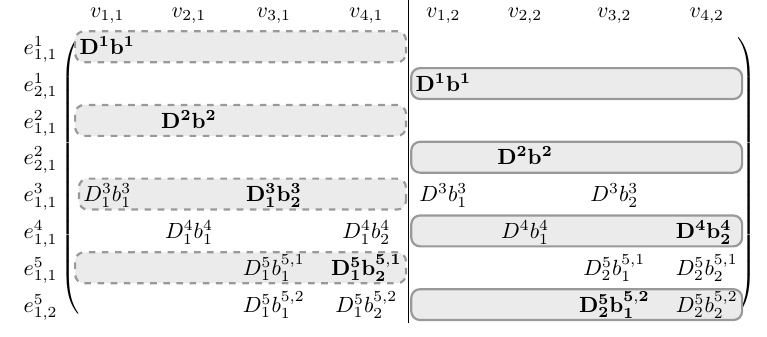}
	\addtocounter{equation}{1}\tag{\theequation} \label{eq:map_decomp}
	\end{equation*}
\end{example}

\begin{proof}[Proof of main theorem]
	First we show the only if direction. 
	For a generically minimally rigid pinned subspace-incidence framework, the rigidity matrix $M$ is generically full rank,
	so there exists at least one summation term $\sigma$ in \eqref{eq:laplace} where each 
	submatrix ${M}_j$ is generically full rank.
	As the submatrices don't have any overlapping rows with each other,
	we can perform row elimination on $M$ to obtain a matrix $N$ with the same rank, 
	where all submatrices ${M}_j$ are simultaneously 
	converted to a \emph{permuted reduced row echelon form} ${N}_j$,
	where each row in $N_j$ has exactly one non-zero entry $\beta^j_i$ at a unique column $i$, 
	In other words, all ${N}_j$'s can be converted simultaneously to reduced row echelon form
	by multiplying a permutation matrix on the left of $N$.
	%
	Now we can obtain a map-decomposition of $\widehat{H}$
	by letting each map $j$ contain multi-hyperedges corresponding to rows of the submatrix $N_j$,
	and assigning each multi-hyperedge in map $j$  the vertex $i$ corresponding to the non-zero entry $\beta^j_i$ in the associated row in ${N}_j$ as tail.
	In addition, such a map-decomposition must satisfy the conditions of Theorem~\ref{thm:rigidity_condition}(2):
	
	Condition~(2a): assume two multi-hyperedges $e^k_{t_1,l}$ and $e^k_{t_2,l}$ are in the same map $j$,
	i.e.\ the rows corresponding to these two edges are included in the same submatrix $M_j$. 
	If $j > s-1$, one of these rows must be all-zero in $M_j$ by Observation~\ref{obs:mat_pattern}, contradicting the condition that $M_j$ is full rank. 
	If $j \le s-1$, both of these rows in $M_j$ will be a multiple of the same row vector \eqref{eq:submat_rowpattern},
	contradicting the condition that $M_j$ is full rank. 

	Condition~(2b): 
	note that the rows  in $M$ corresponding to multi-hyperedges $e^k_{t,l_1}$ and $e^k_{t,l_2}$ have the exactly the same pattern except for different values of $b$'s.
	If $e^k_{t,l_1}$ has vertex $i$ as  tail, after the row elimination, the column containing $b^{k,l_1}_{i}$ will become the only non-zero entry of column $i$ for $N_{j_1}$, 
	while the column containing $b^{k,l_2}_{i}$ will become zero in all column groups, 
	thus $i$ cannot be assigned as tail for $e^k_{t,l_2}$.
	
	

	\medskip
	
	Next we show the if direction, that the conditions of Theorem~\ref{thm:rigidity_condition} imply infinitesimal rigidity. 
	
	Given labeled multi-hypergraph $\widehat{H}$ with a map-decomposition satisfying the conditions of Theorem~\ref{thm:rigidity_condition}, 
	we can obtain  summation term $\sigma$ in the Laplace decomposition~\eqref{eq:laplace} 
	according to the labeling of $\widehat{H}$, where
	%
	each submatrix $M_j$ contain all rows corresponding to the map associated with column group $j$.

	First, it is not hard to show that each submatrix $M_j$ is generically full rank~\cite{whiteley1989matroid}. 
	For completeness, we give a short proof as following.
	According to the definition of a map-graph, 
	the function $\tau:\widehat{E} \rightarrow V$ assigning a tail vertex to each multi-hyperedge  is a one-to-one correspondence.
	We perform symbolic row elimination of the matrix $M$ to simultaneously convert each $M_j$ to its permuted reduced row echelon form $N_j$, where for each row of $N_j$, all entries are zero except for the entry $\beta^k_{t,l}$ corresponding to the vertex $\tau(e^k_{t,l})$, which is a polynomial in  $b^{k,l}_i$'s in the submatrix $M_j$.
	Since $M_j$ cannot contain two rows with the same $k$ and $l$ by Condition (2a), the $b^{k,l}_i$'s in different rows of a same map are independent of each other, $\beta^k_{t,l} \ne 0$ under a generic specialization of $b^{k,l}_i$.
	%
	Since each row of $N_j$ has exactly one nonzero entry and the nonzero entries from different rows are on different columns, the $|V| \times |V|$ matrix $N_j$ is clearly full rank. Thus $M_j$ must also be generically full rank.
	
	We conclude that 
	\begin{equation}
	\det({M}) = \sum_{\sigma} \left( \pm \prod_j \Bigg( \Big(\prod_{r^k_{t,j} \in R^\sigma_j} D^k_{t,j} \Big) \det {M}'[R_j^\sigma, C_j] \Bigg) \right) \label{eq:pure_cond}
	\end{equation}
	where the sum is taken over all $\sigma$ corresponding to a map-decomposition of $\widehat{H}$.
	Generically, the summation terms of the sum \eqref{eq:pure_cond} do not cancel with each other, 
	since $\det(M'[R_j^\sigma,C_j])$ are independent of the multi-linear coefficients $\prod_{r^k_{t,j} \in R^\sigma_j} D^k_{t,j}$ because of
	the requirement that the pins are in general position, 
	and any two rows of $M$ are independent by Condition~(2b).
	This implies that 
	$\widehat{M}$ is generically full rank.
	
	The polynomial \eqref{eq:pure_cond} 	
	gives the pure condition characterizing non-generic frameworks.	
\end{proof}

\subsubsection{Pure condition}

The pure condition \eqref{eq:pure_cond} obtained in the proof of Theorem~\ref{thm:rigidity_condition} 
vanishes at a measure-zero subset of frameworks.
For those frameworks, the above characterization fails. 
However, the geometric meaning of the pure condition is not completely clear. 



Note that there exist combinatorial types of underlying weighted hypergraphs,
for example,  $H$ with a subgraph $(V', E')$, $|V'| < d$ such that $\sum_{e_k \in E'} m_k > |V'|$,
that force the pins to lie in a non-generic position no matter how the vertices are realized.
We rule out such systems by the requirement that pins being in general position in the statement of Theorem~\ref{thm:rigidity_condition}.

	As an example, Figure~\ref{fig:counter_sys} shows a  pinned subspace-incidence system in $d=4$ with 4 vertices and 5 hyperedges, where 
	$m_k$ is $2$ for $k=5$ and is $1$ otherwise. 
	A map-decomposition of the multi-hypergraph $\widehat{H}$ of the system is given in Figure~\ref{fig:counter_map}, and we can easily find a labeling of $\widehat{H}$ satisfying Conditions (1) and (2) in Theorem~\ref{thm:rigidity_condition}.
	However, the system is overconstrained, as generically the four pins associated with $e_1, e_4$ and $e_5$ will not fall on the same plane in $\proj[3]$.
	This situation is both captured  by the pure condition and is ruled by the requirement of Theorem~\ref{thm:rigidity_condition} that the pins being in general position.
	
	
	
	\begin{figure}
		\centering
		\begin{subfigure}{.4\linewidth}
			\centering
			\includegraphics[width=.95\linewidth]{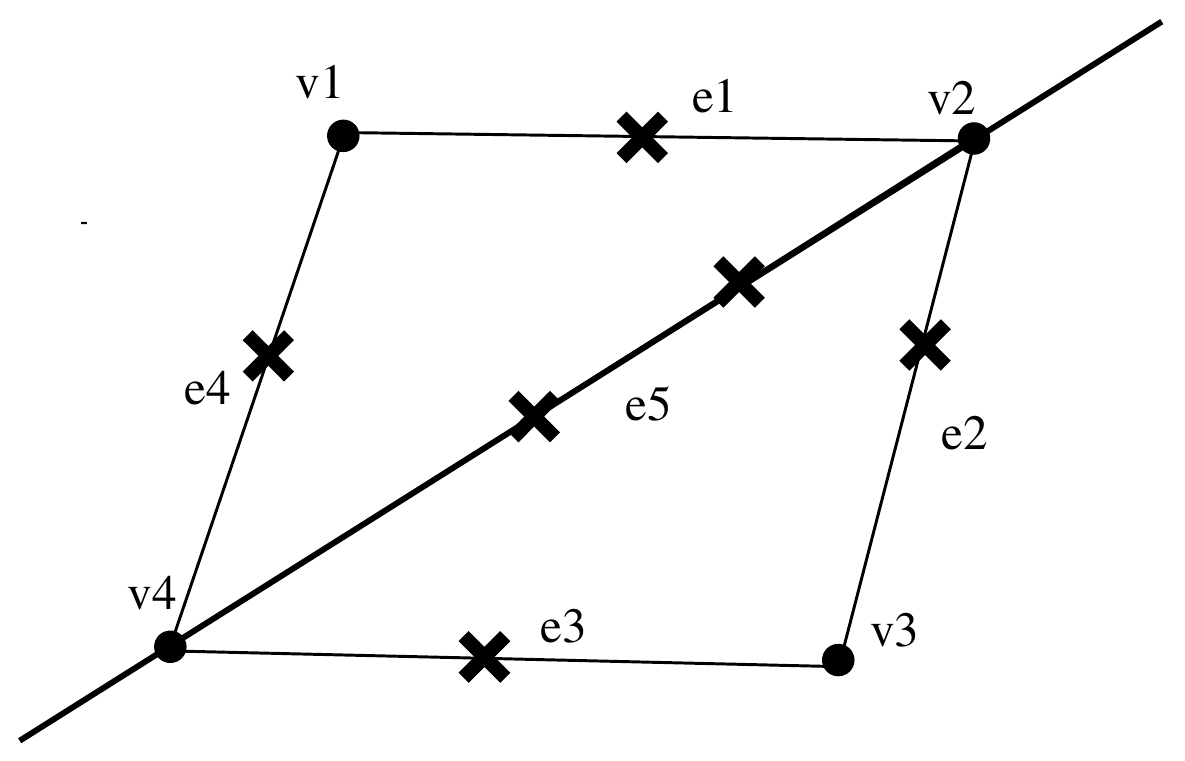}
			\caption{}
			\label{fig:counter_sys}
		\end{subfigure}%
		\begin{subfigure}{.4\linewidth}
			\centering
			\includegraphics[width=.8\linewidth]{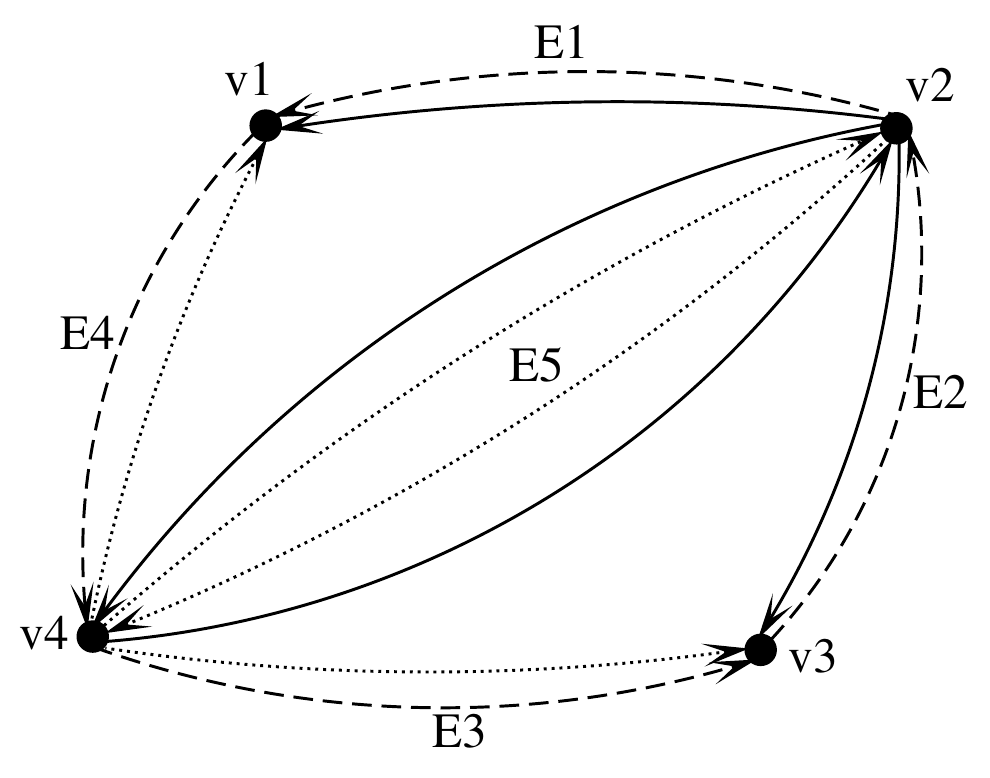}
			\caption{}
			\label{fig:counter_map}
		\end{subfigure}
		\caption{(a) A pinned subspace-incidence system in $d=4$.  
			(b) A map-decomposition of the multi-hypergraph of the system in (a), where multi-hyperedges with different patterns are in different maps, and the tail vertex of each multi-hyperedge is pointed to by an arrow.
		}
	\end{figure}
	
%
	

Figure~\ref{fig:pure} shows a more standard non-generic example captured by the pure condition. Frameworks (a) and (b) are two pinned subspace incidence frameworks with $d=4$, and they have the same underlying weighted hypergraph satisfying the combinatorial characterization of the main theorem. However, framework (a) is minimally rigid but (b) is not, since the pins on hyperedges $e_1$ and $e_2$ in (b) lie on the same line. Evaluating the pure condition at framework (b) shows that it is not generic.

	\begin{figure}
		\centering
			\centering
			\includegraphics[width=.6\linewidth]{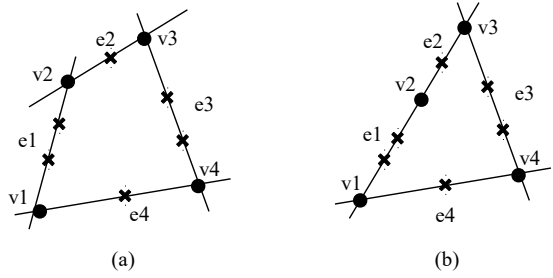}
		\caption{ Two  pinned subspace-incidence  frameworks in $d=4$ with the same underlying weighted hypergraph, where (a) is generic but (b) is not generic. }
		\label{fig:pure}
	\end{figure}

%
%
%

\section{Application: Dictionary Learning}

Dictionary Learning (aka Sparse Coding) is the problem of obtaining a sparse representation
of data points, by learning \emph{dictionary vectors} upon which the data points
can be written as sparse linear combinations.
The Dictionary Learning problem arises in various context(s) such as signal processing and machine learning.
\begin{problem}[Dictionary Learning] \label{prob:dictionary_learning}
A point set $X=[x_1 \ldots x_m]$ in $\mathbb{R}^d$ is said to be \emph{$s$-represented} by a \emph{dictionary} $D = [v_1 \ldots v_n]$ for a given \emph{sparsity} $s < d$,
if there exists $\Theta = [\theta_1 \ldots \theta_m]$ such that $x_i=D\theta_i$,  
with $\Arrowvert\theta_i\Arrowvert_0 \leq s$ (i.e.\ $\theta_i$ has at most $s$ non-zero entries).
Given an $X$ known to be $s$-represented by an unknown \emph{dictionary} $D$ of size $|D|=n$, 
\emph{Dictionary Learning} is the problem of finding any  dictionary $\acute{D}$ 
satisfying the properties of $D$, i.e.\ $|\acute{D}| \le n$, 
and there  exists $\acute{\theta_i} $ such that $x_i=\acute{D} \acute{\theta_i}$  for all $x_i \in X$.
%
\end{problem}

The dictionary under consideration is usually \emph{overcomplete},  with $n > d$.
However we are interested in asymptotic performance with respect to all four variables $n,m,d,s$.
Typically, $m \gg n \gg d > s$. 
Both cases when $s$ is large relative to $d$ and when $s$ is small relative to $d$ are interesting.

We understand the Dictionary Learning problem from an intrinsically geometric point of view. 
Notice that each $x\in X$ lies in an $s$-dimensional subspace $\suppx{x}$,
which is the span of $s$ vectors $v \in D$ that form the \emph{support} of $x$. 
The resulting \emph{$s$-subspace arrangement} $\sxd = \{ \suppx{x}: x\in X \}$ has an underlying 
labeled (multi)hypergraph
$\hsxd = (\cald, \calsxd)$, where $\cald$ denotes the index set of the dictionary $D$ and 
$\calsxd$ is the set of (multi)hyperedges over the indices $\cald$ 
corresponding to the subspaces $\suppx{x}$.
The word ``multi'' appears because if $\suppx{x_1} = \suppx{x_2}$ for data points 
$x_1,x_2 \in X$ with $x_1\not = x_2$, 
then that subspace is multiply represented in $\sxd$ (resp.\ $\calsxd$)
as $\suppx{x_1}$ and $\suppx{x_2}$.

Note that there could be many dictionaries $D$ 
and for each $D$, many possible subspace arrangements $\sxd$ that are solutions to the 
Dictionary Learning problem. 

 \subsection{Systematic Classification of Problems Closely Related to Dictionary Learning and Previous Approaches}
 \label{sec:review}

A closely related problem to Dictionary Learning is
the Vector Selection (aka sparse recovery) problem, which finds a representation of input data in a known dictionary $D$. 

\begin{problem}[Vector Selection]
	Given a dictionary $D \in \mathbb{R}^{d \times n}$  and an input data point
	$x \in \mathbb{R}^d$, the \emph{Vector Selection problem} asks for $\theta \in \mathbb{R}^n$ 
	such that $x = D\theta$ with $\Arrowvert \theta \Arrowvert_{0}$ minimized.
\end{problem} 

That is, $\theta$ is a sparsest support vector that represents $x$ as linear combinations of the
columns of $D$.

An optimization version of Dictionary Learning can be written as:
\[
\min_{D \in \mathbb{R}^{d\times n}}{ \max_{x_i} {\min \Arrowvert \theta_i \Arrowvert_0  : x_i = D\theta_i}}.  
\]
In practice, it is often relaxed to the Lagrangian 
$ \min \sum_{i=0}^m (\Arrowvert x_i - D\theta_i \Arrowvert_2  + \lambda \Arrowvert \theta_i \Arrowvert_1$).

Several traditional Dictionary Learning algorithms work by \emph{alternating minimization}, 
i.e.\ iterating the following two steps \cite{sprechmann2010dictionary,ramirez2010classification,olshausen1997sparse}:

1. Starting from an initial estimation of $D$, solving the Vector Selection problem for all data points $X$ to find a corresponding $\Theta$. This can be done using any vector selection algorithm, such as basis pursuit from \cite{chen1998atomic}. 

2. Given $\Theta$, updating the dictionary estimation by solving the optimization problem is now convex in $D$. 
%
For an 
overcomplete dictionary, 
the general Vector Selection problem is ill defined, 
as there can be multiple solutions for a data point $x$.
Overcoming this by framing the problem as a minimization problem is exceedingly difficult.
Indeed under generic assumptions, 
the Vector Selection problem has  been shown to be NP-hard
by reduction to the Exact Cover by 3-set problem \cite{rey1994adaptive}. 
%
%
One is then tempted to conclude that Dictionary Learning is also NP-hard. 
However, this cannot be directly deduced in general, 
since even though adding a witness $D$
turns the problem into an NP-hard problem, 
it is possible that the Dictionary Learning solves to produce
a different dictionary $\acute{D}$. 

On the other hand, if  $D$ satisfies the condition of being a \emph{frame}, 
i.e. for all $\theta$ such that $\Arrowvert \theta \Arrowvert_0 \leq s$, there exists a $\delta_s$ such that
$ (1-\delta_{s}) \leq \frac{ \Arrowvert D\theta \Arrowvert_2^2}{ \Arrowvert \theta \Arrowvert_2^2} \leq (1+\delta_{s})$,
it is guaranteed that the sparsest solution to 
the Vector Selection problem can be found via $L_1$ minimization \cite{donoho2006compressed,candes2006robust}.

\sloppy

One popular alternating minimization method is the Method of Optimal Dictionary (MOD) \cite{engan1999method},
which follows a two step iterative approach using a maximum likelihood formalism, 
and uses the pseudoinverse to compute $D$: $D^{(i+1)}=X\Theta^{(i)^T} (\Theta^n \Theta^{i^T})^{-1}.$
The MOD can be extended to Maximum A-Posteriori probability setting with different priors to take into account preferences in the recovered dictionary. 

\fussy

Similarly, $k$-SVD \cite{aharon2006svd} uses a two step iterative process, with a Truncated Singular Value Decomposition to update $D$.
This is done by taking every atom in $D$ and applying SVD to $X$ and $\Theta$ restricted to only the columns that have contribution from that atom. When $D$ is restricted to be of the form $D = [ B_1, B_2 \ldots B_L ]$ where $B_i$'s are orthonormal matrices, a more efficient pursuit algorithm
is obtained for the sparse coding stage using a block coordinate relaxation. 

Though alternating minimization methods work well in practice, there is no theoretical guarantee that the their results will converge to a true dictionary.
Several recent works 
give provable algorithms under stronger constraints on $X$ and $D$.
Spielman et.\ al \cite{spielman2013exact} give an $L_1$ minimization based approach which is provable to find the exact dictionary $D$, but requires $D$ to be a basis. 
Arora et.\ al \cite{arora2013new} and Agarwal et.\ al \cite{agarwal2013exact} 
independently give provable non-iterative algorithms for learning approximation of overcomplete dictionaries. 
Both of their methods are based on an overlapping clustering approach to find data points sharing a dictionary vector, 
and then estimate the dictionary vectors from the clusters via SVD. 
The approximate dictionary found using these two algorithms can be in turn used in iterative methods like k-SVD as the initial estimation of dictionary, leading to provable convergence rate \cite{agarwal2013learning}.
However, these overlapping clustering based methods require the dictionaries to have the \emph{pairwise incoherence} property
which is much stronger than the frame property.

%
%
%

\smallskip

 By imposing a systematic series of increasingly stringent constraints on the input, 
 we classify previous approaches to Dictionary Learning 
 as well as a whole set of independently interesting problems closely related to Dictionary Learning. 
 A summary of the input conditions and results of these different types of Dictionary Learning approaches 
 can be found in Table \ref{tab:classification}. 
 
 

 
 

 A natural restriction of the general Dictionary Learning problem is the following.
 We say that a set of data points $X$ \emph{lies on} a set $S$ of $s$-dimensional subspaces 
 if  for all $x_i \in X$, there exists $S_i \in S$ such
 that $x_i \in S_i$.
 \begin{problem} [Subspace Arrangement Learning] \label{prob:subspace_arrangement_learning}
 	Let $X$ be a given set of data points that are known to lie on a set $S$ of
 	$s$-dimensional subspaces of $\mathbb{R}^d$, where $|S|$ is at most $k$.
 	(Optionally assume that the subspaces in $S$ have bases such that their
 	union is a frame).
 	\emph{Subspace arrangement learning} finds any subspace arrangement $\acute{S}$ of
 	$s$-dimensional subspaces of $\mathbb{R}^d$ satisfying these conditions, 
 	i.e.\ $|\acute{S}| \leq k$, $X$ lies on $\acute{S}$, (and optionally the union of the bases of $\acute{S}_i \in \acute{S}$ is a frame).
 \end{problem}
 
 There are several known algorithms for learning subspace arrangements. 
 Random Sample Consensus (RANSAC) \cite{vidal2011subspace} is an approach to learning subspace arrangements that isolates, one subspace at a time, via random sampling.
 When dealing with an arrangement of $k$ $s$-dimensional subspaces, for instance, the method samples $s+1$ points
 which is the minimum number of points required to fit an $s$-dimensional subspace. The procedure then finds and discards
 inliers by computing the residual to each data point relative to the subspace and selecting the points whose residual is below 
 a certain threshold. The process is iterated until we have $k$ subspaces or all points are fitted.
 RANSAC is robust to models corrupted with outliers. 
 Another method called Generalized PCA (GPCA) \cite{vidal2005generalized} 
 uses techniques from algebraic geometry   for subspace clustering, 
 finding a union of $k$ subspaces by factoring a homogeneous polynomial of
 degree $k$ that is fitted to the points $\{ x_1 \ldots x_m \}$.
 Each factor of the polynomail represents the normal vector to a subspace. 
 We note that GPCA can also determine $k$ if it is unknown.

 The next problem is obtaining a minimally sized dictionary from a subspace arrangement.
 %
 \begin{problem} [\begin{small}Smallest Spanning Set for Subspace Arrangement\end{small}] \label{prob:smallest_spanning_set}
 	Let $S$ be a given
 	set of $s$-dimensional subspaces of $\mathbb{R}^d$ specified by giving their bases. 
 	Assume their intersections are known to be $s$-represented by a set $I$ of vectors with $\arrowvert I \arrowvert$ at most $n$. 
 	Find any set of vectors $\acute{I}$ that satisfies these conditions.
 \end{problem}
 
 The smallest spanning set is not necessarily unique in general, 
 and is closely related to the intersection semilattice of subspace arrrangement \cite{bjorner1994subspace,goresky1988stratified}.
 Furthermore, 
 under the condition that the subspace arrangement comes from 
 a frame dictionary, the smallest spanning set 
 is the union of:
 (a) the smallest spanning set $I$ of the pairwise intersection of all the subspaces in $S$;
 (b) any points outside the pairwise intersections that, together with $I$, completely $s$-span the subspaces in $S$. 
 This directly leads to a recursive algorithm for the smallest spanning set problem. 

The \emph{fitted dictionary learning} problem is the version of dictionary learning where the underlying hypergraph $\hsxd$ is specified.
%
\begin{problem}[Fitted Dictionary Learning]  \label{prob:fitted_dictionary_learning}
	Let $X$ be a given set of data points in $\mathbb{R}^d$. 
	For an unknown  dictionary
	$D = [v_1,\ldots, v_n]$ that $s$-represents $X$, 
	we are given the hypergraph $\hsxd$ of the underlying subspace arrangement $\sxd$.
	Find any dictionary $\acute{D}$ of size $|\acute{D}| \leq n$, 
	that is consistent with the hypergraph $\hsxd$. 
\end{problem}

 %
 When $X$ contains sufficiently dense data to solve Problem \ref{prob:subspace_arrangement_learning}, 
 Dictionary Learning reduces to problem \ref{prob:smallest_spanning_set}, 
 i.e.\ we can use the following two-step procedure to solve the Dictionary Learning problem: 
 \begin{itemize}
 	\item Learn a Subspace Arrangement $S$ for $X$ (instance of Problem \ref{prob:smallest_spanning_set}). 
 	\item Recover $D$ by either finding the smallest Spanning Set of $S$ (instance of Problem \ref{prob:subspace_arrangement_learning}),
 	or a fitted dictionary learning (instance of Problem~\ref{prob:fitted_dictionary_learning}).
 \end{itemize}
 
 Note that it is not true that the decomposition strategy should always be applied for the same sparsity $s$. 
 The decomposition starts out
 with the minimum given value of $s$ and is reapplied with iteratively higher $s$ if a solution
 has not be obtained.

 		 A summary of the input conditions and results of these different types of Dictionary Learning problems 
 		 from this section can be found in Table \ref{tab:classification}.

 		 \renewcommand{\arraystretch}{1.2}
 		 \begin{table*}[hptb]
 		 	\centering
 		 	\begin{footnotesize}
 		 		\resizebox{\textwidth}{!}{
 		 			\begin{tabular}{|p{0.12\textwidth}|p{0.1\textwidth}|p{0.1\textwidth}|p{0.125\textwidth}|p{0.1\textwidth}|p{0.23\textwidth}|p{0.165\textwidth}|p{0.18\textwidth}|}
 		 				\hline
 		 				\multirow{2}{*}{} & \multicolumn{4}{c|}{Traditional Dictionary Learning}  & 
 		 				\multirow{2}{\hsize}{Dictionary Learning via subspace arrangement and spanning set} & 
 		 				\multirow{2}{\hsize}{ Dictionary Learning for Segmented Data} & 
 		 				\multirow{2}{\hsize}{Fitted Dictionary Learning (this paper)}\\ \cline{2-5}
 		 				& Alternating Minimization Approaches & Spielman et. al \cite{spielman2013exact} & Arora et. al \cite{arora2013new}, Agarwal et. al \cite{agarwal2013exact} & For generic data (this paper) &  	&  &  \\ \hline
 		 				

 		 				\multirow{2}{\hsize}{Input and Conditions}  & \multirow{2}{\hsize}{$D$ satisfies frame property} & \multicolumn{2}{p{0.24\hsize}|}{$X$ generated from hidden dictionary $D$ and certain distribution of $\Theta$} &
 		 				 \multirow{2}{\hsize}{Generic data points $X$} &
 		 				 \multirow{2}{\hsize}{$X$ with promise that each subspace / dictionary support set is shared by sufficiently many data points in $X$} & \multirow{2}{\hsize}{Partitioned / segmented Data $X$} &
 		 				  \multirow{2}{\hsize}{$X$ with underlying hypergraph specified}   \\ \cline{3-4}
 		 				& &  $D$ is a basis &  $D$~is~pairwise incoherent & & & & \\ \hline

 		 				Minimum $m$ guaranteeing existence of a locally unique dictionary of a given size $n$ & Question~\ref{question:size_GDL} & $O(n \log n)$ & $O(n^2 \log^2 n)$ 
 		 				& $\dfrac{d-s}{d-1} n$ (Corollary~\ref{cor:dictionary_size_bound}); Unknown for general position data (Question~\ref{question:general_position})
 		 				& Minimum number of points to guarantee a unique subspace
 		 				arrangement that will give a spanning set of size $n$ & Question~\ref{question:support_equivalence_learning} & $\dfrac{d-s}{d-1} n$, where the underlying hypergraph satisfy Theorem~\ref{thm:rigidity_condition}
 		 				\\ \hline
 		 				
 		 				Dictionary Learning algorithms & MOD, k-SVD, etc. & Algorithm from \cite{spielman2013exact} & Algorithms from \cite{arora2013new,agarwal2013exact} & 
 		 				Straight\-forward algorithm (Corollary~\ref{corr:straight_forward_algo})&
 		 				Subspace Arrangement Learning Algorithms (Problem~\ref{prob:subspace_arrangement_learning}) and Spanning Set Finding ( Problem~\ref{prob:smallest_spanning_set}) & Question~\ref{question:support_equivalence_learning} and Spanning Set Finding ( Problem~\ref{prob:smallest_spanning_set}) & Realization using DR-plan (Theorem~\ref{theorem:canonical_is_optimal}) 
 		 				\\ \hline
 		 				
 		 				Minimum $m$ guaranteeing efficient dictionary learning & Unknown & \multicolumn{2}{p{0.2\hsize}	|}{$O(n^2 \log^2 n)$} &
 		 				Unknown & 
 		 				Unknown & Unknown  &Unknown \\ \hline
 		 				
 		 				Illustrative example & \multicolumn{4}{p{0.3\hsize}|}{}  & 
 		 				(a) \includegraphics[width=\hsize]{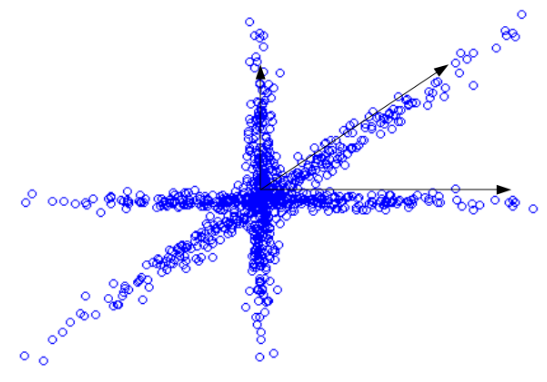}
 		 				& 
 		 				(b) \includegraphics[width=\hsize]{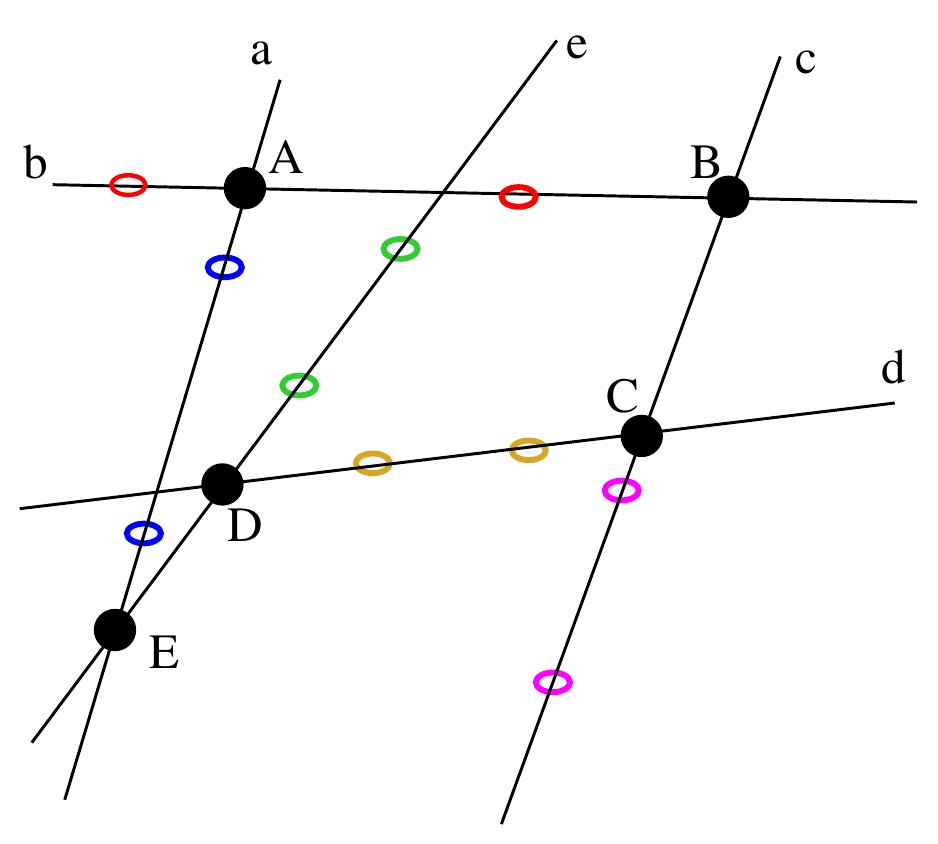}
 		 				& 
 		 				(c) \includegraphics[width=\hsize]{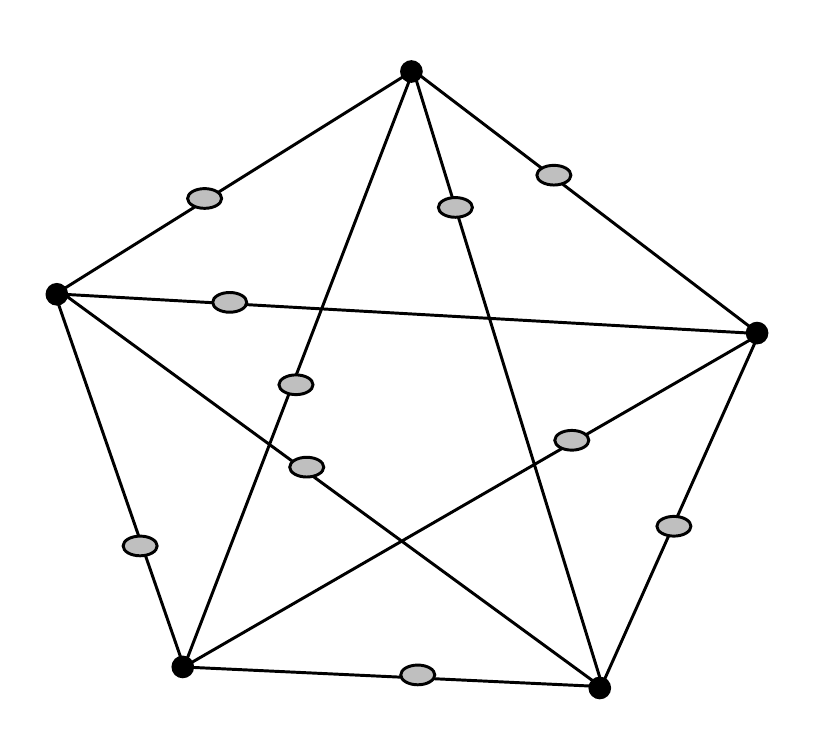}
 		 				
 		 				\\ \hline
 		 			\end{tabular}
 		 			}
 		 			
 		 		\end{footnotesize}
 		 		\caption{Classification of Problems}
 		 		\label{tab:classification}
 		 	\end{table*}

\subsection{New Bounds and Algorithms for Dictionary Learning for Random Data via Pinned Subspace-Incidence Systems}
\label{sec:bounds}

%

The following corollary of Theorem~\ref{thm:rigidity_condition}
gives a tight bound on dictionary size for generic data points,
which leads to an algorithm for finding a dictionary provided the bounds hold.
\begin{corollary}[Dictionary size tight bound for generic data]\label{cor:dictionary_size_bound_generic}
	Given a set of $m$ points $X=\{x_1,..,x_m\}$ in $\mathbb{R}^d$, generically there
	is a dictionary $D$ of size $n$ that $s$-represents $X$ 
	only if $(d-s)m \le (d-1)n$. 
	Conversely, if $(d-s)m = (d-1)n$ and the supports of $x_i$ (the nonzero entries of
	the $\theta_i$'s) are known to form a $(d-1,0)$-tight hypergraph $H$, then generically, there
	is at least one and at most finitely many such dictionaries.
\end{corollary}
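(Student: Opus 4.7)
My plan is to reduce the dictionary learning problem to a pinned subspace-incidence system and apply Theorem~\ref{thm:rigidity_condition}. Given a hypothetical dictionary $D=[v_1,\ldots,v_n]$ that $s$-represents $X$, I construct a weighted hypergraph $H=(V,E,m)$ in which $V$ indexes the dictionary vectors, $E$ consists of the distinct size-$s$ supports appearing among the $x_i$, and $m_k$ equals the number of data points sharing support $e_k$. For each $e_k$, the pinning subspace $q_k$ is spanned by the (generically $m_k$) data points with that support. A realization $p:V\to\mathbb{R}^d$ of the framework $(H,Q,p)$ is then exactly a dictionary consistent with $X$, and because $|e_k|=s$ for every $k$ while $\sum_k m_k=m$, the counting identity $\sum_k m_k(d-|e_k|)=(d-1)|V|$ in Condition~(1) of Theorem~\ref{thm:rigidity_condition} specializes to $(d-s)m=(d-1)n$.

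For the \emph{only if} direction, suppose a dictionary of size $n$ exists. Then the induced pinned subspace-incidence system is independent, since it admits the realization given by $D$, so the rigidity matrix of Section~4, of size $\sum_k m_k(d-|e_k|)\times (d-1)|V|$, must have linearly independent rows at this framework. For generic $X$ the induced pins lie in the open dense stratum where the rank is maximal, forcing $\sum_k m_k(d-|e_k|)\le(d-1)|V|$, which rewrites as $(d-s)m\le(d-1)n$.

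For the \emph{converse}, assume $(d-s)m=(d-1)n$ and that $H$ is $(d-1,0)$-tight. Tightness together with Lemma~\ref{lem:map_decomposition} yields a decomposition of the associated multi-hypergraph $\widehat H$ into $d-1$ edge-disjoint maps, establishing Condition~(1) of Theorem~\ref{thm:rigidity_condition}. For Condition~(2), I plan to use the freedom in distributing and orienting the $m_k(d-s)$ copies of each hyperedge among the $d-1$ maps: subcondition~(2a) is a distribution constraint stating that no map receives two copies of the same $(k,l)$-pair, and subcondition~(2b) is a matching condition on tails within a fixed $t$-index. Both should follow from a Hall-type / alternating-cycle argument on the map-decomposition, exploiting the slack provided by $(d-1,0)$-tightness. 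Once (1) and (2) are in place, Theorem~\ref{thm:rigidity_condition} yields generic minimal rigidity of $(H,Q)$, hence at least one and at most finitely many realizations, i.e., dictionaries.

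The main obstacle I anticipate is the combinatorial argument for Condition~(2): it is not enough to partition $\widehat E$ into $d-1$ maps; one must simultaneously choose tail-orientations and pin/$t$-labels in each map respecting both (2a) and (2b). A secondary subtlety is translating genericity of $X\subset\mathbb{R}^d$ into genericity of the pins in the framework sense, so that the pure condition emerging from the Laplace decomposition does not vanish. Since each $q_k$ is the span of unrestricted data points and $X$ is generic in $\mathbb{R}^d$, the resulting pins should avoid the measure-zero zero set of the pure condition, but this warrants an explicit check that the map $X\mapsto Q$ is dominant onto the space of admissible pin configurations.
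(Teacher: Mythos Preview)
Your reduction of dictionary learning to a pinned subspace-incidence system and appeal to Theorem~\ref{thm:rigidity_condition} is exactly how the paper intends the corollary to follow; the paper states it without proof, as an immediate consequence of the main theorem. Two places in your argument need repair, however.

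\textbf{Only-if direction.} The step ``the system is independent, since it admits the realization $D$'' inverts the implication in the paper's definition (independence $\Rightarrow$ existence of a realization, not conversely). And even granting dependence among the Jacobian rows when $(d-s)m>(d-1)n$, dependent rows do not preclude a solution---redundant equations can still be simultaneously satisfiable---so the rank argument as written does not yield the inequality. The clean argument is a dimension count on the incidence variety $\Phi=\{(p,X):x_i\in\langle p(e_i)\rangle\ \forall i\}$: parameterizing by $p$ (contributing $(d-1)n$) and, for each fixed $p$, by the position of each $x_i$ inside its $(s-1)$-dimensional projective flat (contributing $(s-1)m$), one gets $\dim\Phi\le(d-1)n+(s-1)m$. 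If $(d-s)m>(d-1)n$ this is strictly less than $(d-1)m$, so the projection to $X$-space is not dominant and generic $X$ has empty fiber.

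\textbf{Converse and Condition~(2).} You are right that $(d-1,0)$-tightness of $\widehat H$ gives only Condition~(1). The paper's phrasing is informal here; the intended hypothesis is really the full hypothesis of Theorem~\ref{thm:rigidity_condition}. Note also that for \emph{generic} $X$ no two data points can lie in a common $s$-flat through $s$ dictionary vectors, so the correct reading has $m_k=1$ for every hyperedge (this is also what the algorithm of Corollary~\ref{corr:straight_forward_algo} enforces). With $m_k=1$, Condition~(2b) is vacuous, and your genericity worry about the map $X\mapsto Q$ evaporates since each pinning subspace is just the line through a single data point. Condition~(2a), however, still asks for a map-decomposition in which the $d-s$ copies of each hyperedge land in distinct maps, and this does \emph{not} follow automatically from $(d-1,0)$-tightness of $\widehat H$; your proposed Hall-type argument would be patching a looseness in the corollary's statement rather than in your proof. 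The honest fix is to read the hypothesis as ``$\widehat H$ admits a map-decomposition satisfying (2a)--(2b)'', after which the converse is immediate from Theorem~\ref{thm:rigidity_condition}.
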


%

Quantifying the term ``generically'' in Corollary~\ref{cor:dictionary_size_bound_generic}
yields Corollaries~\ref{cor:dictionary_size_bound} and \ref{corr:straight_forward_algo} below.  

\begin{corollary}[Lower bound for random data]\label{cor:dictionary_size_bound}
	Given a set of $m$ points $X=\{x_1,..,x_m\}$ picked from a distribution $\rho$
	with respect to which nongenericity has measure zero, 
	a dictionary $D$ that $s$-represents $X$ has size at least $(\frac{d-s}{d-1})m$ with probability 1.
	In other words, 
	$|D| = \Omega(X)$ if $s$ and $d$ are constants.
\end{corollary}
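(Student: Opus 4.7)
The plan is to deduce Corollary~\ref{cor:dictionary_size_bound} as an ``almost sure'' version of the generic statement in Corollary~\ref{cor:dictionary_size_bound_generic}, using the hypothesis that $\rho$ puts measure zero on non-generic configurations. First I would translate the dictionary learning instance into a pinned subspace-incidence system: a candidate dictionary $D = [v_1,\ldots,v_n]$ that $s$-represents $X$ determines a hypergraph $H$ on vertex set $[n]$ whose hyperedge $e_i$ is the support $\suppx{x_i}$ of $x_i$ in $D$, and each data point $x_i$ is taken as the pin for its hyperedge (so $m_k = 1$). Then $D$ itself is a realization, certifying that the algebraic system $\algesystem = 0$ has a solution for $Q = X$. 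Without loss of generality one may assume $|e_i| = s$, since shrinking a support only strengthens the constraint and tightens the desired inequality.

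Next I would invoke the counting direction of Corollary~\ref{cor:dictionary_size_bound_generic}: for generic $X$, the existence of such a realization forces the dimension count $m(d-s) \le n(d-1)$, i.e.\ $n \ge \tfrac{d-s}{d-1}\,m$. The reason, which is already implicit in the proof of Theorem~\ref{thm:rigidity_condition}, is that whenever $m(d-s) > n(d-1)$ the rigidity matrix has more rows than columns, the system $\algesystem = 0$ is strictly over-constrained, and the image of the projection from realizations to pin configurations lies in a proper algebraic subvariety of the ambient pin space, and hence has Lebesgue measure zero.

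Finally I would close the loop using the assumption on $\rho$. For each fixed integer $n < \tfrac{d-s}{d-1}m$ and each of the finitely many hypergraphs $H$ on $n$ vertices with $m$ size-$s$ hyperedges, the set of $X$ for which $(H, X)(p) = 0$ has a solution $p$ is contained in the zero set of a nontrivial polynomial and hence has Lebesgue measure zero. A union bound over the finite collection of pairs $(n, H)$, combined with the assumption that $\rho$ assigns probability zero to every measure-zero set of this type, yields that $X \sim \rho$ admits no dictionary of size below $\tfrac{d-s}{d-1}m$ with probability $1$. The main subtlety is justifying the measure-zero claim uniformly across all candidate hypergraphs $H$ simultaneously; this reduces to two finiteness facts (finitely many combinatorial types of $H$ on at most $n$ vertices, and finitely many admissible integers $n < \tfrac{d-s}{d-1}m$) together with the standard vanishing fact that a nonzero polynomial vanishes on a Lebesgue-null set.
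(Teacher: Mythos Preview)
Your proposal is correct and follows the same approach as the paper, which simply records that Corollary~\ref{cor:dictionary_size_bound} is obtained by ``quantifying the term `generically' in Corollary~\ref{cor:dictionary_size_bound_generic}'' without spelling out further details. Your write-up fleshes out exactly what that quantification entails---the parameter-counting/image-dimension argument for each fixed over-constrained $(n,H)$ and the finite union bound over combinatorial types---so it is more explicit than the paper but not a different route.
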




\begin{corollary}[Straightforward Dictionary Learning Algorithm]\label{corr:straight_forward_algo}
	Given a set of $m$ points  $X=[x_1 \ldots x_m]$ picked from a distribution $\rho$ 
	with respect to which nongenericity has measure zero, 
	there is a straightforward pebble-game~\cite{jacobs1997algorithm,lee2005finding,lee2007pebble} based algorithm constructs a dictionary $D=[ v_1 \ldots v_n ]$ that $s$-represents $X$,
	where $n = \left( \dfrac{d-s}{d-1} \right) m$.
	The time complexity of the algorithm is $O(m)$ when we treat $d$ and $s$ as constants. 
\end{corollary}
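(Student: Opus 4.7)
The plan is to reduce the algorithmic claim to three ingredients already in hand: (i) the counting bound of Corollary \ref{cor:dictionary_size_bound_generic}, which forces $n = (d-s)m/(d-1)$ once every pin is one--dimensional ($m_k = 1$) and every hyperedge has size exactly $s$; (ii) the combinatorial characterization of minimal rigidity in Theorem \ref{thm:rigidity_condition}, which tells us that \emph{any} hypergraph $H$ on $n$ vertices whose associated multi-hypergraph $\widehat H$ admits a $(d-1)$-map decomposition compatible with the labeling conditions (2a)--(2b) will make $(H, X)$ generically minimally rigid; and (iii) the pebble-game machinery of \cite{lee2007pebble,jacobs1997algorithm}, which incrementally builds such $H$ while certifying sparsity.

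First I would describe the algorithm. Start with $n$ isolated vertices, each carrying $d-1$ pebbles, where $n = (d-s)m/(d-1)$ (assume for simplicity that $(d-1)$ divides $(d-s)m$; otherwise round up and add a constant number of redundant rows, which does not affect the asymptotics). Process the $m$ data points in an arbitrary order; for each $x_i$ choose $s$ vertices of $H$ whose aggregate pebble count is at least $(d-s)$ and at most $(d-1)s - (\text{existing incidences})$, and insert the size-$s$ hyperedge together with $d-s$ oriented multi-edges, collecting the required pebbles by the standard pebble-shifting procedure. Because $m(d-s) = (d-1)n$, the game terminates with a $(d-1,0)$-tight multi-hypergraph, which by Lemma \ref{lem:map_decomposition} decomposes into $d-1$ edge-disjoint maps; the map assignment is read off from the final pebble orientations. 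Once $H$ is fixed, the dictionary is recovered by sweeping the map decomposition vertex by vertex: each vertex $v$ is the tail of exactly one multi-edge in each of the $d-1$ maps, yielding $d-1$ linear equations on the $d-1$ projective coordinates of $v$, which generically pin $v$ uniquely given the dictionary vectors previously determined along the maps.

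Correctness follows because the hypergraph produced satisfies both conditions of Theorem \ref{thm:rigidity_condition}: sparsity by construction of the pebble game, and the labeling conditions (2a)--(2b) because each hyperedge's $(d-s)$ multi-edges are sent to distinct maps and no two multi-edges of the same hyperedge are given the same tail (these are exactly the local rules enforced when the pebble game inserts multi-edges). Genericity of $X$ with respect to the distribution $\rho$ ensures that the pure condition polynomial from the proof of Theorem \ref{thm:rigidity_condition} is nonzero almost surely, so the linear system solved at each vertex has a unique solution.

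The main obstacle is the claimed $O(m)$ running time, since the naive pebble-game analysis is $O(n^2)$. The key point is that with $d$ and $s$ treated as constants, we have total freedom in choosing which $s$ vertices support each new data point: we can therefore steer the construction so that each insertion requires only a bounded pebble search (for instance, by maintaining a frontier of vertices with available pebbles and drawing the $s$ support vertices from that frontier, adding fresh vertices as needed). Each insertion then touches $O(1)$ vertices and edges, giving $O(1)$ amortized cost per hyperedge and $O(m)$ overall; the subsequent map-guided back-solve is linear in the number of multi-edges, which is $(d-1)n = O(m)$. The remaining subtlety is verifying that such a frontier-based insertion never gets stuck before all $m$ hyperedges are placed, which follows from the slack condition $m(d-s) \le (d-1)n$ being tight and from the fact that the pebble game only fails on graphs that are \emph{already} overconstrained, a situation we avoid by controlled introduction of new vertices.
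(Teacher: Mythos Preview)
Your combinatorial side is essentially fine: building a $(d-1,0)$-tight multi-hypergraph via the pebble game and reading conditions~(2a)--(2b) off the orientation is exactly the right bookkeeping, and with $m_k=1$ the labeling conditions are indeed easy to meet. The problem is the recovery step.

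The sweep you propose---``each vertex $v$ is the tail of exactly one multi-edge in each of the $d-1$ maps, yielding $d-1$ linear equations \ldots\ given the dictionary vectors previously determined along the maps''---does not work. First, a map-graph is a functional graph: every connected component contains a cycle, so there is no acyclic ordering of vertices in which, when you reach $v$, all other endpoints of the $d-1$ multi-edges tailed at $v$ are already solved. Second, even if such an ordering existed, the $d-1$ incidence constraints tailed at $v$ involve \emph{all} $s$ vertices of each of those hyperedges, not just $v$ and one predecessor; unless every one of those $s-1$ companions is already fixed, the determinant equation is not linear in $p(v)$. Third, minimal rigidity only guarantees finitely many realizations, not a unique one, so even a well-posed local step is a small algebraic (not linear) system with several solutions.

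The paper sidesteps all three issues by \emph{choosing} the hypergraph rather than merely certifying its sparsity. It first builds a constant-size minimally rigid seed $H_0$ (solved once as an algebraic system), then identifies a fixed set of base vertices $V_B\subset V_0$ and a fixed gadget $B_1$ on $d-s$ new vertices plus $V_B$; every subsequent step attaches an isomorphic copy $B_i$ of $B_1$ to the \emph{same} already-realized $V_B$. Thus each of the $O(m/(d-1))$ steps is an independent algebraic system of size $O(d)$ with all boundary coordinates known, and solving it costs $O(1)$. Your frontier-based insertion does achieve $O(m)$ for the pebble game, but it produces an arbitrary $(d-1,0)$-tight hypergraph with no such decoupling, and you have not shown how to realize it in $O(m)$ time. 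To fix your argument you need either the paper's seed-plus-repeated-gadget construction or some other explicit scheme that guarantees each new batch of vertices can be solved from already-fixed ones by a constant-size system.
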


The key idea of the algorithm for Corollary~\ref{corr:straight_forward_algo} is that we can choose a convenient underlying hypergraph $H(S_{X,D})$.
Thus the algorithm has two parts: (1) constructing $\hsxd$ 
and (2) constructing the $s$-subspace arrangement $\sxd$ and the dictionary $D$. 

\smallskip\noindent \textbf{(1) Algorithm for
	constructing the underlying hypergraph $H(S_{X,D})$ for a hypothetical
	$s$-subspace arrangement $S_{X,D}$}:

The algorithm  works in three stages to construct a expanded mutli-hypergraph $\hat{H}(S_{X,D})$:
\begin{enumerate}
	\item 
	We start by constructing a minimal minimally rigid hypergraph $H_0 = (V_0,E_0)$, using the \emph{pebble game algorithm} introduced below.
	Here $|V_0| = k(d-s)$, $|E_0| = k(d-1)$, where $k$ is the smallest positive integer such that
	$ {k(d-s) \choose s}  \ge k(d-1) $,
	so it is possible to construct $E_0$ such that
	no more than one hyperedge in $E_0$ containing the same set of vertices in $V_0$.
	The values $|V_0|$ and $|E_0|$ are constants when we think of $d$ and $s$ as constants.
	
	\item We use the pebble game algorithm to append a set $V_1$ of $d-s$ vertices and a set $E_1$ of $d-1$ hyperedges 
	to $H_0$, such that each hyperedge in $E_1$ contains at least one vertex from $V_1$, 
	and the obtained graph $H_1$ is still minimally rigid. 
	The subgraph $B_1$ induced by $E_1$ has vertex set $V_{B_1} = V_1  \bigcup V_B$, 
	where $V_B \subset V_0$. 
	We call the vertex set $V_B$ the \emph{base vertices} of the construction.
	
	\item Each of the following construction step $i$ appends
	a set $V_i$ of $d-s$ vertices and a set $E_i$ of $d-1$ hyperedges
	such that the subgraph $B_i$ induced by $E_i$ has vertex set $V_i \bigcup V_B$,
	and $B_i$ is isomorphic to $B_1$. 
	In other words, at each step, 
	we directly append a basic structure the same as $(V_1,E_1)$ to the base vertices $V_B$. 
	It is not hard to verify that the obtained graph is still minimally rigid.

	%
\end{enumerate}
The \emph{pebble game algorithm} by \cite{streinu2009sparse} works on a fixed finite set $V$ of vertices 
and constructs a $(k,l)$-sparse hypergraph. 
Conversely, any $(k,l)$-sparse hypergraph on vertex set $V$ can be constructed by this algorithm.
The algorithm initializes by putting $k$ pebbles on each vertex in $V$. 
There are two types of moves: 
\begin{itemize}
	\item \emph{Add-edge}: adds a hyperedge $e$ (vertices in $e$ must contain at least $l+1$ pebbles), removes a pebble from a vertex $v$ in $e$, and assign $v$ as the tail of $e$; 
	\item \emph{Pebble-shift}: for a hyperedge $e$ with tail $v_2$, and a vertex $v_1 \in e$ which containing at least one pebble, 
	moves one pebble from $v_1$ to $v_2$, and change the tail of $e$ to $v_1$.
\end{itemize}
At the end of the algorithm, if there are exactly $l$ pebbles in the hypergraph, then the hypergraph is $(k,l)$-tight.

Our algorithm runs a slightly modified pebble game algorithm to find a $(d-1,0)$-tight expanded mutli-hypergraph.
We require that each add-edge move adding  $(d-s)$ copies of a hyperedge $e$, 
so a total of $d-s$ pebbles are removed from vertices in $e$.
Additionally, the multiplicity of a hyperedge, not counting the expanded copies, cannot exceed 1. 
For constructing the basic structure of Stage 2, 
the algorithm initializes by putting $d-1$ pebbles on each vertex in $V_1$.
In addition, an add-edge move can only add a hyperedge that contains at least one vertex in $V_1$, 
and a pebble-shift move can only shift a pebble inside $V_1$.

\sloppy
The pebble-game algorithm takes $O\left(s^2 |V_0| {|V_0| \choose s}\right)$ time in  Step 1, 
and $O\left(s^2 \left(|V_0| + (d-s)\right) {|V_0| + (d-s) \choose s }\right)$ time in Step 2. 
Since the entire underlying hypergraph $\hsxd$ has $m = |X|$ edges, 
Step 3 will be iterated $O( m / (d-1))$ times, and each iteration takes constant time. 
Therefore the overall time complexity for constructing $\hsxd$ is 
\[O\left( s^2 \left(|V_0| + (d-s)\right) {|V_0| + (d-s) \choose s } + \left(m/(d-1)\right) \right)\] 
which is $O(m)$ when $d$ and $s$ are regarded as constants. 

\fussy

\smallskip\noindent \textbf{(2) Algorithm for constructing the
	$s$-subspace arrangement $S_{X,D}$ and the dictionary $D$}:

The construction of the $s$-subspace arrangement $S_{X,D}$ naturally follows from 
the construction of the underlying hypergraph $\hsxd$.
For the initial hypergraph $H_0$, 
we get a pinned subspace-incidence system $(H_0,X_0)(D_0)$ by arbitrarily choose $|X_0| = |E_0|$ pins from $X$.
Similarly,  for Step 2 and each iteration of Step 3,
we form a pinned subspace-incidence system $(B_i,X_i)(D_i)$ by arbitrarily choosing $|X_i| = d-1$ pins from $X$.

Given $X_0$, we know that the rigidity matrix
-- of the $s$-subspace framework $H_0(S_{X_0,D_0})$ -- with indeterminates representing
the coordinate positions of the points in $D_0$ -- generically has full rank
(rows are maximally independent), under the pure conditions of Theorem \ref{thm:rigidity_condition};
in which case, the original algebraic subsystem $(H_0,X_0)(D_0)$ 
(whose Jacobian is the rigidity matrix), with $X_0$ plugged in,
is guaranteed to have a (possibly complex) solution  and only finitely
many solutions for $D_0$. Since the pure conditions fail only on a
measure-zero subset of the space of pin-sets $X_0$, where each pin is in
$S^{d-1}$, it follows that if the pins in $X_0$ are picked uniformly at random from $S^{d-1}$
we know such a solution exists for $D_0$ (and $S_{X_0,D_0}$) and can be found by
solving the algebraic system $H_0(S_{X_0,D_0})$.

Once we have solved $(H_0,X_0)(D_0)$, 
for each following construction step $i$, 
$B_i$ is also rigid since coordintate positions of the vertices in $V_B$ have been fixed
So similarly, we know a solution exists for $D_i$ (and $S_{X_i,D_i}$) and can be found by
solving the algebraic system $B_i(S_{X_i,D_i})$, which is of constant size $O(d)$.
Although there can be more than one choice of solution for each step, 
since every construction step is based on base vertices $V_B$, 
the solution of one  step will not affect any other  steps, 
so generically any choice will result in a successful solution for the entire construction sequence, 
and we obtain $D$ by taking the union of all $D_i$'s.

When we regard $d$ and $s$ as constants, 
the time complexity for Stage (2)
is the constant time for solving the size $O(|V_0|)$ algebraic system $(H_0,X_0)(D_0)$, 
plus $O(m / (d-1))$ times the constant time for solving the size $O(d)$ system $(B_i,X_i)(D_i)$, 
that is $O(m)$ in total. 

Therefore the overall time complexity of the dictionary learning algorithm is $O(m)$. 

\subsection{Dictionary Learning for Commonly Occurring Data Using DR-Planning}

In practical dictionary learning problems, the data set is usually overconstrained, making it impossible to apply the algorithm given above.

As one solution for such dictionary learning problems,
recall the two-step procedure described in Section~\ref{sec:review}, 
where we first use standard preprocessing algorithms such as RANSAC and GPCA to learn a subspace arrangement from the data set (Problem~\ref{prob:subspace_arrangement_learning}),
and then constructing a fitted dictionary learning problem (Problem~\ref{prob:fitted_dictionary_learning}) based on the subspace arrangement learnt.
Now we can obtain a dictionary by realizing the (possibly overconstrained) pinned subspace incidence-system corresponding to the fitted dictionary learning problem. 

To find a realization to a geometric constraint system, 
the straightforward method is to directly find the real solutions to the entire multivariate polynomial system. 
However, such an approach requires double exponential time in the number of variables.
Thus, it is crucial to use recursive \emph{Decomposition-Recombination (DR-) plans} \cite{hoffman2001decompositionI,hoffman2001decompositionII,jermann2006decomposition,sitharam2005combinatorial} ,
which decompose the original system into locally rigid subsystems and recombines solutions of subsystems to get a solution for the original system.
For a pinned subspace-incidence system $(H, Q)$, a DR-plan is formally defined for the underlying multi-hypergraph $\hat{H}$ (recall the definition from Section~\ref{sec:graph}) as following:

\begin{definition}[DR-plan]
	A \emph{decomposition-recombination (DR-)plan} of the underlying multi-hypergraph $\hat{H}$ of a pinned subspace-incidence system $(H, Q)$ in $\R$ is a forest that has the following properties:
	\begin{enumerate}
		\item\ Each node represents a connected vertex-induced subgraph of $\hat{H}$ that is either  rigid, or a single hyperedge.
		\item\ A root node represents a maximal subgraph of $\hat{H}$.
		\item\ A node represents the subgraph of $\hat{H}$ induced by the union of the node's children. 
		\item\ A leaf node represents a single hyperedge.
	\end{enumerate}
\end{definition}

With the use of a DR-plan, the complexity of realizing a geometric constraint system is determined by the maximum \emph{fan-in} of the DR-plan,
i.e.\ the maximum number of children nodes of any node.
Finding an \emph{optimal DR-plan} i.e.\ one that minimizes the maximum fan-in, however, is usually hard for general geometric constraint systems.

In the recent work~\cite{baker2015designing}, it is shown that an optimal DR-plan can be efficiently found for a large class
of minimally rigid geometric constraint systems with $O(n^3)$ time complexity, leading to efficient realization.
Specifically, this result holds for minimally rigid pinned subspace-incidence systems, as stated in the following theorem:

\begin{theorem}[\cite{baker2015designing}]
	\label{theorem:canonical_is_optimal}
	For generically minimally rigid pinned subspace-incidence, there exists a DR-plan of the underlying multi-hypergraph satisfying the following two additional properties, and any such DR-plan is optimal:
		\begin{enumerate}
			\item\ Children are connected, generically rigid, vertex-maximal proper subgraphs of the parent.
			\item\ If all pairs of rigid vertex-maximal proper subgraphs intersect trivially then all of them are children, otherwise exactly two that intersect non-trivially are children.
		\end{enumerate}
\end{theorem}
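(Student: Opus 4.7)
The plan is to establish the two assertions separately: first to construct a DR-plan satisfying the two additional properties by a top-down recursive procedure driven by the combinatorial characterization of Theorem~\ref{thm:rigidity_condition}, and then to argue that any DR-plan satisfying those properties attains the minimum possible maximum fan-in via a lower-bound argument based on the $(d-1,0)$-tight structure of $\hat{H}$.

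For the existence half, I would build the plan from the root downward. The root represents the full multi-hypergraph $\hat{H}$, which is $(d-1,0)$-tight by hypothesis. At an internal node representing a rigid vertex-induced subgraph $\hat{H}'=(V',\hat{E}')$, I would enumerate all connected, vertex-maximal, generically rigid proper subgraphs of $\hat{H}'$ using a pebble-game sweep: Theorem~\ref{thm:rigidity_condition} and Lemma~\ref{lem:map_decomposition} tell us these are exactly the vertex-induced subgraphs whose multi-hyperedge sets admit a $(d-1)$-map decomposition, and the pebble game from~\cite{streinu2009sparse} produces all maximal such subgraphs in polynomial time. I would then apply the branching rule of the theorem: if every pair of these vertex-maximal proper rigid subgraphs intersects trivially, take all of them as children; otherwise, pick any two whose intersection is non-trivial as the two children. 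To see the children cover the parent, note that in the first case every multi-hyperedge of $\hat{H}'$ must lie in some vertex-maximal proper rigid subgraph (any single hyperedge generates at least a rigid subgraph containing it); in the second case, the union of two non-trivially intersecting rigid subgraphs is rigid by the standard union lemma for $(k,0)$-sparse hypergraphs, and by vertex-maximality this union equals $\hat{H}'$. Recursing until leaves are single multi-hyperedges yields the desired DR-plan.

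For optimality, I would give a matching lower bound on the maximum fan-in. The key claim is that the children at any node $N$ of any valid DR-plan must be generically rigid proper subgraphs whose union is $\hat{H}'$; since the root node of every subtree is rigid and maximal, the children must refine the collection of vertex-maximal proper rigid subgraphs. If all such subgraphs intersect pairwise trivially, they form a \emph{unique} covering family and any DR-plan must at some point split $\hat{H}'$ into at least this many pieces, so fan-in cannot be reduced. If two vertex-maximal proper rigid subgraphs intersect non-trivially, then fan-in $2$ is plainly the smallest achievable (any single child would have to equal the parent, violating properness), and this is what the canonical plan realizes. Combined with an induction on depth, this shows every DR-plan satisfying the two conditions minimizes the maximum fan-in over all valid DR-plans.

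The hard part will be the optimality direction, specifically ruling out that a clever regrouping of multi-hyperedges across non-maximal rigid subgraphs produces a DR-plan of smaller fan-in than the canonical one. To handle this, I would establish a matroid-exchange-type property on the family of generically rigid vertex-induced subgraphs of $\hat{H}$: namely, that any covering of $\hat{H}'$ by proper rigid subgraphs can be refined or coarsened into the canonical family without increasing fan-in. This property should follow from the independence structure of the rigidity matrix rows developed in the proof of Theorem~\ref{thm:rigidity_condition}, together with the $(d-1)$-map decomposition viewpoint, but making this argument precise — particularly across the branching dichotomy — is where the technical weight of the theorem (proved in~\cite{baker2015designing}) lies.
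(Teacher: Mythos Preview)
The paper does not contain a proof of Theorem~\ref{theorem:canonical_is_optimal}. The theorem is stated with the citation~\cite{baker2015designing} and introduced by ``In the recent work~\cite{baker2015designing}, it is shown that\ldots''; no argument for it appears in the present paper. So there is no ``paper's own proof'' to compare your proposal against.

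That said, a few remarks on your sketch in its own right. In the existence half, the claim that ``any single hyperedge generates at least a rigid subgraph containing it'' is not justified and is likely false in general: the DR-plan definition explicitly allows leaf nodes to be single hyperedges \emph{as an alternative} to being rigid, which already signals that a single hyperedge need not be rigid. You need a different reason why the vertex-maximal proper rigid children cover the parent in the trivially-intersecting case. Also, the ``standard union lemma'' you invoke --- that two rigid $(k,0)$-tight subgraphs with non-trivial intersection have rigid union --- requires a count: with $|V_1|+|V_2|-|V_1\cap V_2|$ vertices and at least $(d-1)|V_1|+(d-1)|V_2|$ multi-hyperedges minus those in the intersection, you need the intersection itself to be $(d-1,0)$-tight, which is not automatic from ``non-trivial''. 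In the optimality half you correctly identify the real difficulty (ruling out regroupings into non-maximal rigid pieces) and then defer it to~\cite{baker2015designing}; that is honest, but it means your proposal is an outline rather than a proof.
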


Figure~\ref{fig:well} shows an example of the optimal DR-plan found by applying Theorem~\ref{theorem:canonical_is_optimal} to a generically minimally rigid pinned subspace-incidence system
with 12 vertices and 24 pins, where $d=3$.
The DR-plan has the optimal maximum fan-in  12.

\begin{figure}
	\centering
		\begin{subfigure}{.4\linewidth}
			\centering
			\includegraphics[width=.8\linewidth]{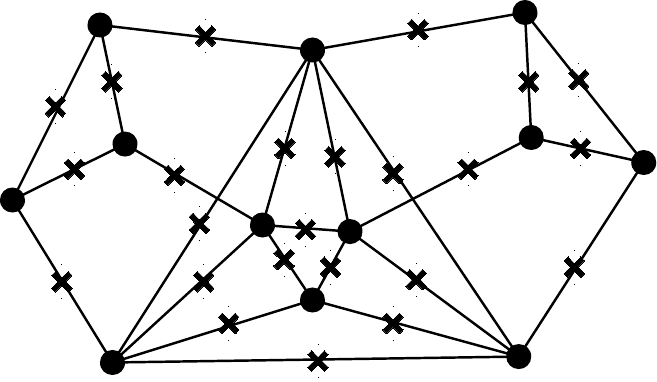}
			\caption{}
			\label{fig:well_system}
		\end{subfigure}%
		\begin{subfigure}{.6\linewidth}
			\centering
			\includegraphics[width=.8\linewidth]{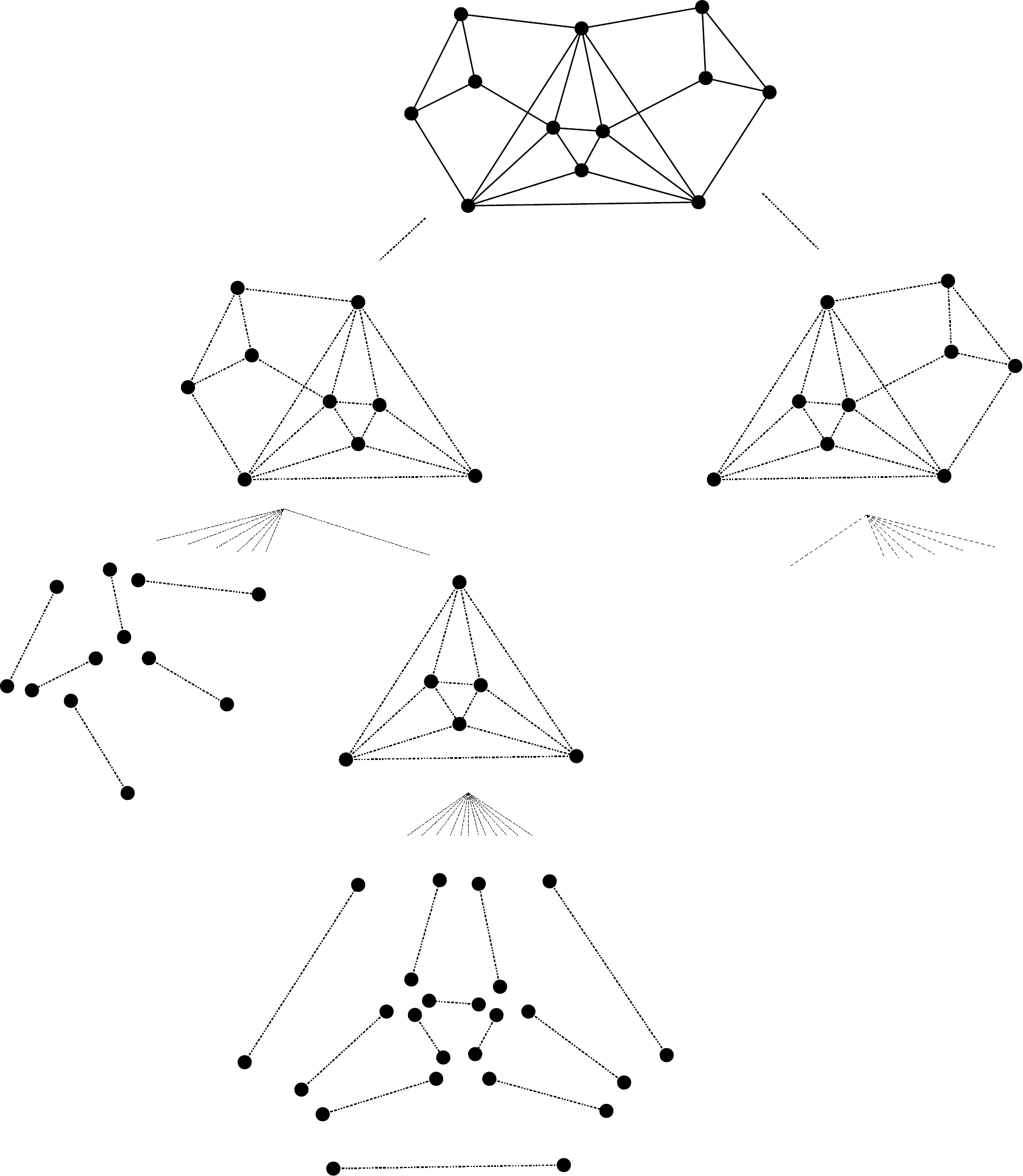}
			\caption{}
			\label{fig:well_plan}
		\end{subfigure}
		\caption{ (a) A generically minimally rigid pinned subspace-incidence system with $d=3$. (b) The optimal DR-plan of (a) obtained using Theorem~\ref{theorem:canonical_is_optimal}.}
	\label{fig:well}
\end{figure}

However, for overconstrained systems, the optimal DR-planning problem
was shown to be NP-hard even for bar-joint systems \cite{lomonosov2004graph},
and we strongly conjecture it is also NP-hard for pinned subspace-incidence systems.
One possible workaround is to first apply the pebble-game algorithm from \cite{jacobs1997algorithm,lee2005finding} 
to an overconstrained system $(H,Q)$ to find a maximal, generically minimally rigid subsystem, and then apply the algorithm from~\cite{baker2015designing} to find an optimal DR-plan for the subsystem.
However, by doing this we are forgoing the advantages of smaller sized DR-plan which is usually provided by overconstrained cases.
We also note that different maximal, generically minimally rigid subsystems of $(H,Q)$ may still have different maximum fan-ins.
As an example, Figure~\ref{fig:over}(a) shows a overconstrained pinned subspace-incidence system with $d=3$. 
Figure~\ref{fig:over} (b) and (c) are DR-plans of two maximal, generically minimally rigid subsystems of (a), where (b) has the optimal maximum fan-in 11, while (c) has a maximum fan-in of 12.

\begin{figure}
	\centering
	\begin{subfigure}{.23\linewidth}
		\centering
		\includegraphics[width=\linewidth]{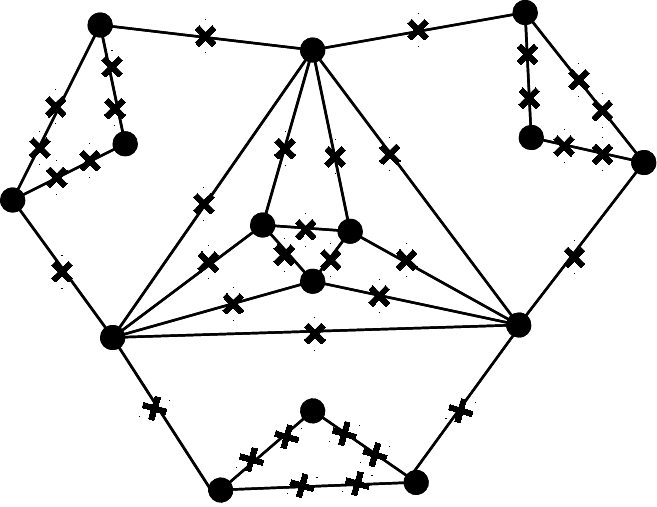}
		\caption{}
		\label{fig:over_system}
	\end{subfigure}%
	\begin{subfigure}{.4\linewidth}
		\centering
		\includegraphics[width=.95\linewidth]{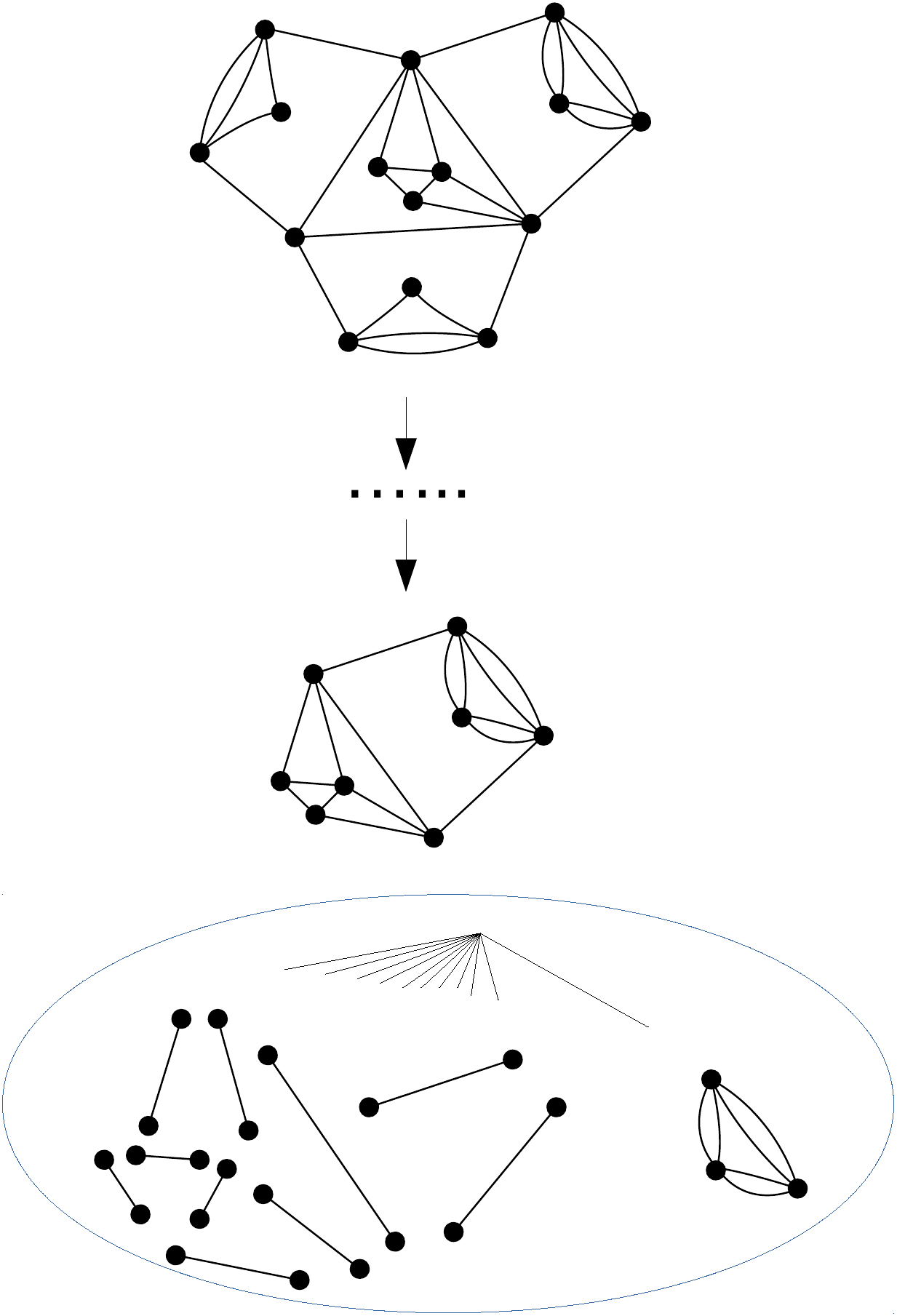}
		\caption{}
		\label{fig:over1}
	\end{subfigure}%
	\begin{subfigure}{.37\linewidth}
		\centering
		\includegraphics[width=.95\linewidth]{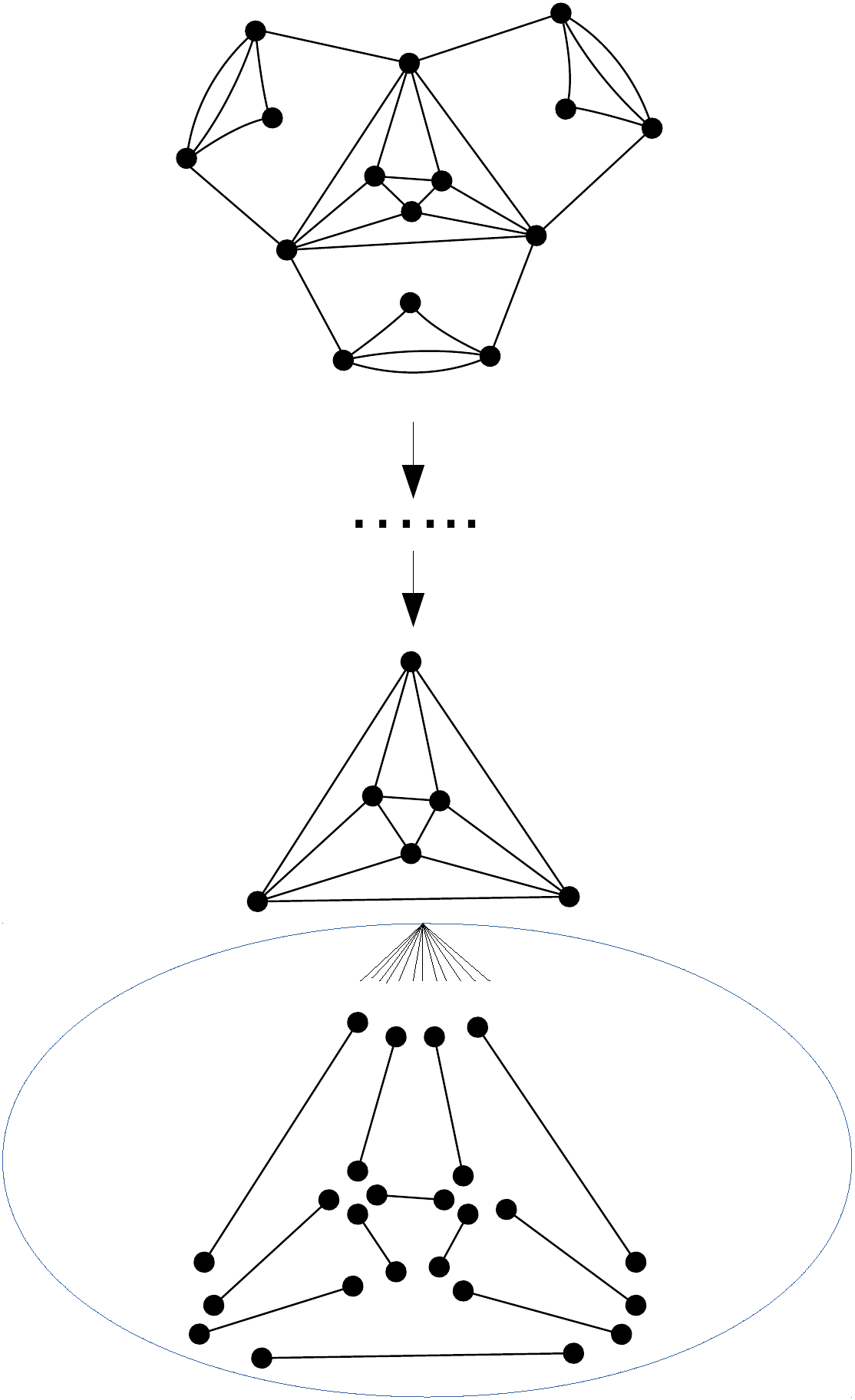}
		\caption{}
		\label{fig:over2}
	\end{subfigure}
	\caption{ (a) A overconstrained pinned subspace-incidence system with $d=3$. (b) A maximal minimally rigid subgraph with maximum fan-in 11 (optimal). (c)  A maximal minimally rigid subgraph with maximum fan-in 11}
	\label{fig:over}
\end{figure}

\section{Open Questions}
A natural restriction of the two step problem in Section~\ref{sec:review}, which learns subspace arrangement followed by spanning set is the following,
where the data set $X$ is given in \emph{support-equivalence} classes. 
For a given subspace $t$ in the subspace arrangement $\sxd$ (respectively hyperedge $h$ in the hypergraph's edge-set $\calsxd$), 
let $X_t = X_h  \subseteq X$ be the equivalence class of data points $x$ such that $\suppx{x} = t.$ 
We call the data points $x$ in a same $X_h$ as \emph{support-equivalent}.
\begin{question} [Dictionary Learning for Partitioned Data] \label{question:support_equivalence_learning}
	(1) What is the minimum size of $X$ and $X_i$'s (representing data $X$ partitioned into $X_i's$) guaranteeing 
	that there exists a locally unique dictionary $D$ 
	for a $s$-subspace arrangement $\sxd$ satisfying $|D| \le n$, 
	and $X_i$ represents the support-equivalence classes of $X$ with respect to $D$?
	
	(2) How to find such  a dictionary $D$? 
	\end{question}

	With regard to the problem of minimizing $|D|$,  
	very little is known for simple restrictions on $X$. 
	For example the following question is open.
	\begin{question} \label{question:general_position}
		Given a general position assumption on $X$,
		what is the best lower bound on $|D|$ for Dictionary Learning?
		Conversely, are smaller dictionaries possible than indicated by 
		Corollary~\ref{cor:dictionary_size_bound} (see Section~\ref{sec:bounds}) under such an assumption?
		\end{question}

		Question \ref{question:general_position} gives rise to the following pure combinatorics open question
		closely related to the intersection semilattice of subspace arrrangement \cite{bjorner1994subspace,goresky1988stratified}.
		
		\begin{question}
			\label{question:lattice}
			Given weights $w(S)\in \mathbb{N}$ assigned to size-$s$ subsets $S$ of $[n]$. 
			For $T \subseteq [n]$ with $|T| \ne s$, 
			\[
			w(T) = 
			\begin{cases}
			0 & |T| < s\\
			\displaystyle\sum_{S \subset T, |S| = s} w(S) & |T|> s 
			\end{cases}
			\]
			Assume  additionally the following constraint holds:
			for all subsets $T$ of $[n]$ with $s \le |T| \le d$, $w(T) \le |T|-1$.
			Can one give a nontrivial upper  bound on $w([n])$?
			\end{question}

			The combinatorial characterization given by Theorem \ref{thm:rigidity_condition} leads to the following question
			for general Dictionary Learning.
			\begin{question}
				\label{question:size_GDL}
				What is the minimum size of a data set $X$ such that 
				the Dictionary Learning for $X$ has a locally unique
				solution dictionary $D$ of a given size? 
				What are the geometric characteristics of such an $X$? 
				\end{question}

\section{Conclusion}

In this paper, we studied the generic rigidity of pinned subspace-incidence systems, and  obtained a combinatorial characterization of  generic minimal rigidity for pinned subspace-incidence systems.
We then showed the application of pinned subspace-incidence systems  in Dictionary Learning, 
giving a tight bound on dictionary size in a specific setting as well as new dictionary learning algorithms in this and other general settings.
We provided a systematic classification of problems related to dictionary learning 
together with various algorithms, their assumptions  and performance, 
and formalized related open questions from a geometry perspective.

 	\bibliographystyle{siam}
 	\bibliography{refs}
\end{document}